\newcommand{\bi}{\begin{itemize}}
\newcommand{\ei}{\end{itemize}}
\newcommand{\ben}{\begin{enumerate}}
\newcommand{\een}{\end{enumerate}}
\newcommand{\bc}{\begin{cases}}
\newcommand{\ec}{\end{cases}}
\newcommand{\bd}{\begin{description}}
\newcommand{\ed}{\end{description}}
\newcommand{\be}{\begin{equation}}
\newcommand{\ee}{\end{equation}}
\newcommand{\bea}{\begin{eqnarray}}
\newcommand{\eea}{\end{eqnarray}}
\newtheorem{thm}{Theorem}
\newtheorem{propos}{Proposition}
\newtheorem{ass}{Assumption}
\newtheorem{algo}{Algorithm}
\theoremstyle{plain}
\newcommand{\nn}{\nonumber}
\begin{document}

\title{Capacity of electron-based communication over bacterial cables: the full-CSI case}
\author{Nicol\`{o}~Michelusi~and~Urbashi~Mitra
\thanks{
This research has been funded in part by the following grants:
ONR N00014-09-1-0700, AFOSR FA9550-12-1-0215, DOT CA-26-7084-00, NSF CCF-1117896, NSF CNS-1213128, NSF CCF-1410009, NSF CPS-1446901.
N. Michelusi is in part supported by AEIT (Italian association of electrical engineering) through the scholarship "Isabella Sassi Bonadonna 2013".
}
\thanks{N. Michelusi and U. Mitra are with the Dept. of Electrical Engineering, University of Southern California. email: \{michelus,ubli\}@usc.edu.}
\thanks{Part of this work will appear at the 2015 IEEE International Conference on Communications \cite{MichelusiICC15}
and has been submitted to the 2015 IEEE International Symposium on Information Theory (ISIT) \cite{MichelusiISIT15}.
}
}
\maketitle
\vspace{-5mm}
\begin{abstract}
Motivated by recent discoveries of microbial communities that transfer electrons across centimeter-length scales, this paper studies the information capacity of bacterial cables via electron transfer, which coexists with molecular communications,
under the assumption of full causal channel state information (CSI).
 The bacterial cable is modeled as an electron queue that transfers electrons from the encoder at the electron donor source, which controls the desired input electron intensity, to the decoder at the electron acceptor sink. \emph{Clogging} due to local ATP saturation along the cable is modeled. A discrete-time scheme is investigated, enabling the computation of an achievable rate. The regime of asymptotically small time-slot duration is analyzed, and the optimality of binary input distributions is proved, \emph{i.e.}, the encoder transmits at either maximum or minimum intensity, as dictated by the physical constraints of the cable. A dynamic programming formulation of the capacity is proposed, and the optimal binary signaling is determined via policy iteration. It is proved that the optimal signaling has smaller intensity than that given by the \emph{myopic policy}, which greedily maximizes the instantaneous information rate but neglects its effect on the steady-state cable distribution. In contrast,
 the optimal scheme balances the tension between achieving high instantaneous information rate, and inducing a favorable steady-state distribution, such that those states characterized by high information rates are visited more frequently, thus revealing the importance of CSI. This work represents a first contribution towards the design of electron signaling schemes in complex microbial structures,
  \emph{e.g.}, bacterial cables and biofilms, 
 where the tension between maximizing the transfer of information and guaranteeing the well-being of the overall bacterial community arises,
 and motivates further research on the design of more practical schemes, where CSI is only partially available.
\end{abstract}
\section{Introduction}
Cellular respiration relies on a continuous flow of electrons from an electron donor (ED) to an electron acceptor (EA) through the electron transport chain (ETC) of the cell to produce energy in the form of the molecule adenosine triphosphate (ATP), and to sustain vital operations and functions \cite[Section 16.2]{Lodish}.
This fundamental mechanism is well-known for individual, isolated cells. However, in the past decade,
 remarkable discoveries of multi-cellular microbial communities that transfer electrons between cells and across 
multi-cellular structures have been made \cite{Naggar}.
This mechanism, termed \emph{electron transfer}, has been observed
in molecular assemblies known as \emph{bacterial nanowires}, and in macroscopic architectures, such as biofilms and multi-cellular bacterial cables \cite{Pfeffer}.
  These experimental observations raise the possibility of 
  microbial communication via \emph{electron transfer}, which coexists with the better-known communication strategies based on molecular diffusion \cite{Naggar,Reguera2,Arjmandi},
  enabling cells to quickly sense and respond to their environment.

In this paper, motivated by these experimental observations and building on our
recent queuing theoretic model of bacterial cables \cite{JSACmiche},
we study the capacity of bacterial cables under the assumption of perfect CSI at both the encoder and the decoder.
 The encoder controls the desired intensity of the electron signal entering the cable,  and the decoder attempts to decode  the transmitted message based on the measured output electron process.
  The bacterial cable (the channel) has an internal state evolving dynamically as a function of electrons leaving and entering the cable,
 and events occurring within the cable, and therefore falls within the broad class of channels with state, \emph{e.g.}, \cite{Goldsmith,Pfister,Chen}.
 Specifically,
the cable is treated as a single black box, which takes electrons as input from the ED source and outputs electrons into the EA sink,
and is modeled as a finite-state Markov channel~\cite{Chen}, controlled by the input signal.
That is, the detailed dynamics of electrons and of the states of the cells located along the cable, as discussed in \cite{JSACmiche},
are not explicitly accounted for, but only \emph{global} effects on the cable resulting from these \emph{local} interactions are modeled.
 Electrons enter and exit the cable according to Poisson processes,
 whose intensities are functions of the internal cable state, which in turn determines the ability of the cable to relay electrons,
 and of the encoded signal.
 
The contributions of this paper are as follows.
 We characterize the capacity of a discrete-time version of the system,
 using results on the capacity of finite-state Markov channels with feedback derived in \cite{Chen},
  and prove the optimality of stationary Markov input distributions, which depend only on the current state of the cable.
  We then consider 
the regime of asymptotically small time-slot duration. Based on this asymptotic analysis, we prove the optimality of  
binary input distributions, thus extending previous results on the capacity of \emph{static} Poisson channels \cite{Wyner} to our \emph{dynamic} setting.
We prove that the capacity maximization problem is a Markov decision process (MDP) \cite{Bertsekas2005},
with state given by the internal cable state, action given by the 
expected desired input electron intensity, which generates the binary intensity signal,
 and reward given by the instantaneous mutual information rate,
and can thus be solved efficiently using standard optimization algorithms, \emph{e.g.}, policy iteration (see \cite{Bertsekas2005}).
We show that the optimal input distribution optimizes a trade-off between
achieving high instantaneous information rate, and inducing
an "optimal" steady-state distribution of the cable state, 
such that those states
where the transmission of information is more favorable are visited more frequently.
On the other hand, 
the optimal distribution for the \emph{myopic} or \emph{greedy} policy, which greedily maximizes the instantaneous information rate, without considering 
its impact on the steady-state distribution of the cable, performs poorly.
In particular, we prove that the optimal expected desired input electron intensity
is smaller than that dictated by the myopic strategy. 
In fact, larger input electron intensities 
tend to quickly recharge the electron reserves within the cable, hence the ATP reserves of the cells,
resulting in a clogging of the cable rendering it unable to further relay electrons until
its reserves discharge to sufficiently low levels.

Remarkably, this work represents a first contribution towards the design of \emph{electron signaling} schemes in complex microbial structures, \emph{e.g.}, bacterial cables and biofilms,
 where the tension between maximizing the transfer of information and guaranteeing the well-being of the overall bacterial community  often arises,
 and thus motivates further research in this direction, \emph{e.g}, using methods based on statistical physics~\cite{Mitra}.
 Moreover, our numerical evaluations
 reveal the importance of CSI, which enables adaptation
of the input signal to the state of the cell, and thus motivates further research on the design of more practical schemes, where CSI is only partially available,
and of state estimation techniques.

Most of the recent literature
on the design of biological communication systems is based on 
\emph{molecular diffusion} \cite{Arjmandi,Nakano,Rose,Kadloor,Einolghozati,Mosayebi,Oiwa,Kuran201086}.
The achievable capacity for the chemical channel
is investigated in \cite{Kadloor}, under Brownian motion, and in \cite{Einolghozati}, under a diffusion channel.
In \cite{Arjmandi,Mosayebi}, novel molecular modulations are proposed.
In \cite{Oiwa}, an in-vitro molecular communication system  is designed and,
in \cite{Kuran201086}, an energy model is proposed.
In \cite{JSACUM2}, 
upper bounds on the capacity of communication networks over linear time-invariant  Poisson channels
have been investigated in the context of molecular diffusion, based on the symmetrized Kullback-Leibler divergence.
Therein, a static channel with inter-symbol interference is considered, and, similar to our work, the optimality of binary input distributions is also proved based on such upper bound.
In this paper, instead, we
build on the stochastic model recently developed in \cite{JSACmiche}
to study the capacity of bacterial cables via \emph{electron transfer}.
The proposed channel model is \emph{dynamic}, as opposed to \emph{static}, and
 explicitly captures biological constraints 
of bacterial cables, which are not present in microbial communications based on molecular diffusion,
such as clogging of the cable induced by local ATP saturation of the cells, resulting in 
degradation of the electron transfer efficiency of the cable, and the minimum input electron flow requirement in order to keep the cells alive.

Finite-state Markov channels have received significant attention in the information theoretic community, \emph{e.g.}, \cite{Goldsmith,Pfister,Chen}.
The case with feedback and CSI at both the encoder and decoder is considered in \cite{Chen}. We specialize the capacity formulation of \cite{Chen} to our Poisson channel, and show that 
it can be achieved by input distributions independent of the feedback signal.
The case with no CSI has been considered in \cite{Pfister} and in
 \cite{Goldsmith}, for finite-state Markov channels whose
 transition probabilities are independent of the input signal.
 The capacity of channels with i.i.d. states
 controlled by actions scheduled by the encoder
  has been considered in \cite{Weissman}, and in \cite{Chiru}, for the setting where actions are 
  generated adaptively.

The capacity of continuous-time Poisson channels is known, and it has been derived 
 in \cite{Wyner} for the \emph{static} case.
The case with i.i.d. block-fading, CSI at the receiver, and partial CSI at the transmitter,
has been considered in \cite{Chakraborty}.
The capacity of a Poisson channel with side information 
on spurious counts generated by an adversary at the receiver
is considered in \cite{Bross}. 
The bacterial cable considered in this paper is also
modeled as a continuous-time Poisson channel. However, unlike \cite{Wyner,Chakraborty,Bross},
the channel is finite-state Markov, and its state is controlled by the input signal generated by the encoder.
While the capacity of continuos-time channels is known,
the capacity of discrete-time Poisson channels is unknown,
and only upper and lower bounds have been derived \cite{JSACUM2,Lapidoth}.

This paper is organized as follows. In Section~\ref{sysmo}, we present the system model
and the discrete-time representation.
In Section~\ref{sec:capacity}, we analyze the capacity of the discrete-time model and
study its asymptotic capacity.
In Section~\ref{numres}, we present numerical results.
In Section~\ref{concl}, we conclude the paper. 
The proofs of the theorems and propositions are provided in the Appendix.

\section{System model}
\label{sysmo}
We consider a continuous-time communication system based on a bacterial cable, depicted in Fig. \ref{baccable}.
A stochastic model for a bacterial cable has been proposed in \cite{JSACmiche}.
The bacterial cable
contains $C$ cells and 
 has an internal state $S(t)\in\mathcal S$ at time $t$,
 taking values from the state space $\mathcal S$. $S(t)$ evolves in a stochastic fashion as a result of random
events occurring within the cable, \emph{e.g.}, electrons entering ($X(t)$) and exiting ($Y(t)$) the cable at random times,
as detailed in~\cite{JSACmiche}.
The communication system includes an encoder,
which maps the message $m\in\{1,2,\dots,M\}$ to a \emph{desired electron intensity} $\lambda(t)\in [\lambda_{\min},\lambda_{\max}],\ t\in[0,T]$,
where
$T$ is the codeword duration,
and $\lambda_{\min}>0$ and $\lambda_{\max}>\lambda_{\min}$ are, respectively, the minimum and maximum electron intensities allowed into the cable.
These are physical constraints induced by the nature of the cable.
We let $\rho\triangleq\frac{\lambda_{\min}}{\lambda_{\max}}\leq 1$.
Note that the stochastic model developed in \cite{JSACmiche} assumes that the cells 
may die at random times, \emph{e.g.}, as a consequence of an insufficient input electron flow.
In this paper, we assume that the minimum input electron intensity
$\lambda_{\min}$ is sufficient to keep the entire cable alive, so that  cells do not die.\footnote{The
 more general case where cells die at random times
will be considered as future work, and can be analyzed using tools developed in \cite{Varshney}.}
Note that $\lambda_{\min}$ can also be interpreted as the \emph{dark current} \cite{Wyner}.
Electrons enter the bacterial cable following a Poisson process with rate $\lambda_{in}(t){=}A(S(t))\lambda(t)$,
where $A(s)\in[0,1]$ is the \emph{clogging state}, and denotes the ability of the bacterial cable to accept electrons.
$A(s)$ is specific to our bacterial cable model, and is not present, \emph{e.g.}, in static Poisson channels, where $A(s)=1$ \cite{Wyner}.
In particular, $A(s){=}0$ if the bacterial cable is clogged and no electrons can be accepted, and 
$A(s){=}1$ if all electrons can be accepted without any loss.
In general, $A(s)$ takes value in $[0,1]$, so that only a fraction of the input electrons can be accepted, depending on the state of the cable.
 For instance, if 
the high energy external membrane of the first cell in the cable is full \cite{JSACmiche}, then no electrons can be accepted and 
$A(s)=0$ accordingly.
The input electron process is modeled as a counting process $X(t)\in\{0,1,2,\dots\}$, with $X(0)=0$.
Note that the encoder does not have full control of the timing and number  of electrons released into the cable ($X(t)$), 
but only of $\lambda(t)$,
so that $X(t)$ is a random quantity, which depends both on the encoded signal $\lambda(t)$
 and on the clogging state $A(S(t))$.
 The output electron flow is also modeled as a Poisson process with state-dependent intensity $\mu(S(t))$,
which is a function of the internal state of the cable. We let $Y(t)\in\{0,1,2,\dots\}$ be the counting process associated with the output electron flow,
with $Y(0)=0$, and we denote the maximum output intensity as $\mu_{\max}\triangleq\max_s\mu(s)$.

 \begin{figure}[t]
\centering
\includegraphics[width = .8\linewidth,trim = 10mm 4mm 10mm 9mm,clip=false]{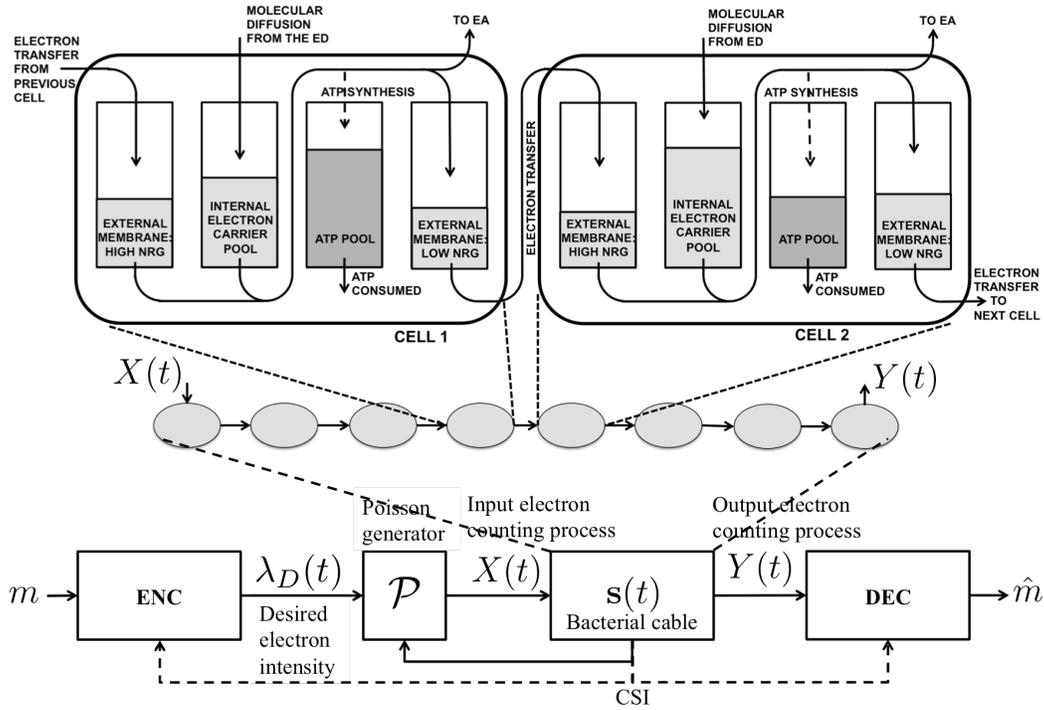}
\caption{Communication system over a bacterial cable.}
\label{baccable}
\end{figure}

We assume that both the encoder and the decoder have full CSI,
\emph{i.e.}, the state sequence $S(0:t)=\{S(\tau),\tau<t\}$ is known at both the encoder and the decoder at time $t$.
Therefore, the desired electron intensity at time $t$, $\lambda(t)$, is chosen as a function of the
message $m$ and the CSI  $S(0{:}t)$, $\lambda(t)=f_t(m;S(0{:}t)){\in}[0,\lambda_{\max}]$.
On the other hand, the decoder, given the output 
electron counting process $Y(0{:}T){=}\{Y(t),t{\in}[0,T]\}$
and the cable state sequence $S(0{:}T)$, estimates
 the message $m$
as $\hat m{=}g(Y(0{:}T),S(0{:}T))$.

\subsection{Bacterial cable model}
In this section, we describe a model for the bacterial cable state $S(t)$.
The model \cite{JSACmiche} presumes that $S(t)$ is given by the interconnection of the
states of each cell in the cable, leading to high dimensionality.
In fact, letting $\mathcal S_{cell}$
be the state space of the internal state of each cell, then the overall state space  of the cable is
$\mathcal S\equiv\mathcal S_{cell}^C$, which grows exponentially with the bacterial cable length $C$.
Herein, we propose an approximation of the model presented in \cite{JSACmiche}, 
which treats the bacterial cable as a black box, and captures only the \emph{global} effects on the 
electron transfer efficiency of the cable, resulting from \emph{local} interactions of the cells along the cable.
Specifically,
we model  $S(t)$ as the number of electrons carried in the cable, \emph{i.e.}, the sum of the number of electrons carried
in the external membrane of each cell, which participate in the ETC to produce
ATP for the cell.
The state space for this approximate model is $\mathcal S\equiv\{0,1,\dots, S_{\max}\}$,
where $S_{\max}$ is the electron carrying capacity of the bacterial cable. 
Letting $ S_{\max}^{(cell)}$ be the electron carrying capacity of a single cell,
we have that  $S_{\max}=C\cdot S_{\max}^{(cell)}$.
The rationale behind this approximate model is as follows:
when $S(t)$ is large, the large number of electrons in the cable can sustain 
a large ATP production rate, so that the ATP pools of the cells are full, and the cable is clogged, resulting in $A(S(t))\simeq 0$.
On the other hand, when $S(t)$ is small, a weak electron flow occurs along the cable, so that
the ATP pools are almost empty and the cells are energy-deprived, so that
the cable can sustain a large input electron flow to recharge the ATP pools ($A(S(t))\simeq 1$).
In order to capture this behavior, we make the following assumptions.
\begin{ass}
\label{ass1}
The clogging state $A(s)$ is a non-increasing function of $s$,
with $A(S_{\max}){=}0$, $A(0){=}1$, $A(s){>}0,\ \forall s{<}S_{\max}$.
The output electron intensity, $\mu(s)$, is 
such that $\mu(0){=}0$, $\mu(s){>}0,\forall s{>}0$.
\end{ass}
According to this simplified model, from state $S(t)$ at time $t$, the state of the cable becomes $S(t^+)=S(t)+1$ with rate $\lambda_{in}(t)$,
corresponding to the arrival of one electron into the cable, 
and to state $S(t^+)=S(t)-1$ with rate $\mu(S(t))$, corresponding to one electron exiting the cable.

\subsection{Discrete-time model}
The bacterial cable can be modeled as a finite-state continuous-time Markov channel \cite{Chen} with Poisson input and output.
We now design a discrete-time system, 
enabling the computation of an achievable rate for the continuous-time system.
In particular, we use an approach similar to \cite{Wyner}, which
focused on the capacity analysis
of a static Poisson channel, and we extend it to our finite-state Markov channel.
We divide the codeword duration $T$ into $N$ slots of fixed duration $\Delta=T/N$.
The $k$th slot is the interval $[k\Delta,(k+1)\Delta)$, for $k\in\{0,1,\dots,N-1\}$.
In \cite{Wyner}, it is shown that capacity-achieving schemes for this discrete-time model
asymptotically achieve the capacity of the continuous-time Poisson channel, 
for asymptotically small values of the slot-duration $\Delta\to 0$.
Similarly, we first design capacity-achieving  schemes and the corresponding capacity for the discrete-time model, 
and then analyze the asymptotic regime $\Delta\to 0$ in Section~\ref{sec:capacity}.
The rationale behind this analysis is that,
if
\begin{align}
\label{deltacond}
\Delta\ll 
\frac{1}{
\mu_{\max}+\lambda_{\max}
},
\end{align}
then the probability that more than one event (\emph{i.e.}, multiple electrons entering/leaving the cable) occurs in a single slot is very small, of the order of $o(\Delta^2)$,
whereas the following events are most likely to occur:
1) no electrons enter/leave the cable, with probability 
$\simeq1{-}\Delta \mu(s){-}\Delta A(s)\lambda$;
2) one electron enters the cable, with probability $\simeq \Delta A(s)\lambda$;
3) one electron exits the cable, with probability $\simeq \Delta\mu(s)$.
However, the following analysis holds for any $\Delta>0$.
Let 
\begin{align}
&\alpha_k\triangleq X((k+1)\Delta)-X(k\Delta),
\\
&\beta_k\triangleq Y((k+1)\Delta)-Y(k\Delta),
\end{align}
be the number of electrons entering and exiting the cable in slot $k$, respectively.
We make the following assumptions, similar to \cite{Wyner}:
\begin{enumerate}
\item 
$\lambda(t)$ is constant within each slot.
We denote its constant value in slot $k$ as 
 $\lambda_{k}$, so that  
$\lambda(t){=}\lambda_{k},\forall t\in[k\Delta,(k{+}1)\Delta)$;
\item The encoder and decoder, at the beginning of slot $k$, know the sampled CSI time-series
$S_0^k{=}(S_0,S_1,\dots,S_k)$, where
 $S_k{=}S(k\Delta)$,
rather than the continuous time-series $S(0{:}k\Delta)$;
\item Additionally, the receiver observes $\beta_k$.
However, it assumes $\beta_k>1$ 
 is a rare event. This event is indeed rare when $\Delta$ satisfies condition (\ref{deltacond}),
 since its probability is of the order of $o(\Delta^2)$.
Therefore, the receiver utilizes only the \emph{positive presence of electrons}
 $\hat\beta_k=\chi(\beta_k>0)$
for decoding purposes,
where $\chi(\cdot)$ is the indicator function,
 \emph{i.e.},
$\hat\beta_k=0$ if no electrons are received in slot $k$, and $\hat\beta_k=1$ otherwise.
\end{enumerate}
We let, for $a,b,\geq 0$,
\begin{align}
\!\!\!\!p_{A,B}^{(\Delta)}(a,b|s,\lambda){=}\mathbb P(\alpha_k=a,\beta_k=b|S_k=s,\lambda_{k}=\lambda).
\end{align}
From the properties of Poisson processes,
we have that
\begin{align}
\label{Pab}
&p_{A,B}^{(\Delta)}(a,b|s,\lambda)
=
\left\{
\begin{array}{ll}
1{-}\Delta \mu(s){-}\Delta A(s)\lambda{+}g_{0,0}(\Delta,s,\lambda) & a=0,b=0,\\
\Delta A(s)\lambda+g_{1,0}(\Delta,s,\lambda) & a=1,b=0,\\
\Delta \mu(s)+g_{0,1}(\Delta,s,\lambda)& a=0,b=1,\\
g_{a,b}(\Delta,s,\lambda) & a>0,b>0,
\end{array}
\right.
\end{align}
where $g_{a,b}(\Delta,s,\lambda)$
are functions such that $g_{a,b}(\Delta,s,\lambda)\sim o(\Delta^{\max\{a+b,2\}})$,
\emph{i.e.}, they decay to zero as $\Delta^{\max\{a+b,2\}}$ when $\Delta\to 0$,
and $\sum_{a,b}g_{a,b}(\Delta,s,\lambda)=0$, so that $p_{A,B}^{(\Delta)}(a,b|s,\lambda)$ is a probability distribution. 
The interpretation of  Eq. (\ref{Pab}) is that, when $\Delta$ satisfies the condition in Eq. (\ref{deltacond}),
the event $(a,b)=(0,0)$ (no electrons enter/leave the cable)
occurs with probability $\simeq1{-}\Delta \mu(s){-}\Delta A(s)\lambda$;
the event $(a,b)=(1,0)$ (one electron enters the cable, no electrons leave the cable)
occurs with probability $\simeq \Delta A(s)\lambda$;
whereas
the event $(a,b)=(0,1)$ (no electrons enter the cable, one electron leaves the cable)
occurs with probability $\simeq \Delta \mu(s)$;
all the other events occur with probability of the order of $o(\Delta^2)$.
We let
\begin{align}
p_{S}^{(\Delta)}(s_1|s_0,\lambda)=\mathbb P(S_{k+1}=s_1|S_k=s_0,\lambda_{k}=\lambda).
\end{align}
 Since $S_{k+1}=S_k+\alpha_k-\beta_k$, we have that
\begin{align}
p_{S}^{(\Delta)}(s_1|s_0,\lambda)
=\!\!\!\!\!\!\!\!
\sum_{a=(s_1-s_0)^+}^{\infty}
p_{A,B}^{(\Delta)}(a,s_0-s_1+a|s,\lambda),
\end{align}
and therefore, using (\ref{Pab}),
\begin{align}
\label{ps}
&p_{S}^{(\Delta)}(s_0+1|s_0,\lambda)
{=}
\Delta A(s)\lambda{+}\sum_{a=1}^{\infty}g_{a,a-1}(\Delta,s,\lambda),
\\
&p_{S}^{(\Delta)}(s_0-1|s_0,\lambda)
{=}\Delta \mu(s){+}\sum_{a=0}^{\infty}g_{a,a+1}(\Delta,s,\lambda),
\\
&p_{S}^{(\Delta)}(s_0|s_0,\lambda)
{=}
1{-}\Delta \mu(s){-}\Delta A(s)\lambda
{+}\sum_{a=0}^{\infty}g_{a,a}(\Delta,s,\lambda),\!\!
\label{10}
\\
&p_{S}^{(\Delta)}(s_1|s_0,\lambda)
{=}
\!\!\!\!\!\!\!\!\sum_{a=(s_1-s_0)^+}^{\infty}\!\!\!\!\!\!\!\!g_{a,s_0{-}s_1{+}a}(\Delta,s,\lambda),\ 
\forall s_1{\notin}\{s_0{-}1,s_0,s_0{+}1\}.
\label{11}
\end{align}
Finally, we define
the joint probability of state transition $S_k{\to}S_{k+1}$ and channel output $\hat\beta_k$ as
 \begin{align}
\!\!\!\!\! p_{S,\hat B}^{(\Delta)}(s_1,b|s_0,\lambda){=}\mathbb P(S_{k+1}{=}s_1,\hat\beta_k{=}b|S_k{=}s_0,\lambda_{k}{=}\lambda).
 \end{align}
Using the fact that $S_{k+1}=S_k+\alpha_k-\beta_k$, we obtain
\begin{align*}
&p_{S,\hat B}^{(\Delta)}(s_1,0|s_0,\lambda)=p_{A,B}^{(\Delta)}(s_1-s_0,0|s_0,\lambda)\chi(s_1\geq s_0),
\\&
p_{S,\hat B}^{(\Delta)}(s_1,1|s_0,\lambda)=\!\!\!\!\!\!\!\!\!\!\sum_{b=\max\{s_0-s_1,1\}}^{\infty}\!\!\!\!\!\!p_{A,B}^{(\Delta)}(b{+}s_1{-}s_0,b|s_0,\lambda).
\end{align*}
The encoding and decoding schemes for this discrete time model
 are defined as follows:
\begin{itemize}
\item\textbf{Encoding}: in slot $k$, given the message $m$ to be transmitted,
and the CSI $S_0^{k}$,
the encoder defines the desired electron intensity
$\lambda_{k}=f_k(m;S_0^{k})$, taking values from the continuous alphabet $[\lambda_{\min},\lambda_{\max}]$;
\item\textbf{Decoding}: given
the CSI $S_0^{N}$ and the output sequence $\hat\beta_0^{N-1}$,
the decoder returns
the message estimate $\hat m{=}g(S_0^{N},\hat\beta_0^{N-1})$.
\end{itemize}

Note that, similar to \cite{Wyner}, we have converted the original continuous-time Poisson channel into a 
discrete-time channel, with input $\lambda_{k}{\in}[\lambda_{\min},\lambda_{\max}]$ and binary output $\hat\beta_k\in\{0,1\}$.
However, unlike \cite{Wyner} which focuses on a static Poisson channel,
we now have a \emph{finite-state Markov channel}, controlled by the input sequence,
with transition probability $p_{S}^{(\Delta)}(s_1|s_0,\lambda_{k})$ and output distribution
\begin{align*}
\mathbb P(\hat\beta_k=1|S_k=s_0,S_{k+1}=s_1,\lambda_{k})=
\frac{p_{S,\hat B}^{(\Delta)}(s_1,1|s_0,\lambda_{k})}{p_{S}^{(\Delta)}(s_1|s_0,\lambda_{k})}.
\end{align*}
Therefore, the optimal input distribution, which is analyzed in the next section, optimizes a trade-off between
achieving high instantaneous information rate, and inducing transitions in the future to states characterized by large information rate.

\section{Capacity analysis}
\label{sec:capacity}
We have the following proposition, which states an important property of the controlled Markov chain $\{(S_k,\hat\beta_{k-1}),\ k\geq 0\}$.
\begin{propos}
\label{stronglyirr}
The Markov chain
$\{(S_k,\hat\beta_{k-1}),\ k\geq 0\}$,
with state space $\mathcal S\times\{0,1\}$ and action space $[\lambda_{\min},\lambda_{\max}]$,
 is \emph{strongly irreducible and strongly aperiodic} \cite{Chen}.
\end{propos}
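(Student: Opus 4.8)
The plan is to reduce both properties to a single structural observation: because $\lambda_{\min}>0$, the positive-probability one-step moves of $\{(S_k,\hat\beta_{k-1})\}$ are the same under \emph{every} admissible action $\lambda\in[\lambda_{\min},\lambda_{\max}]$. Indeed, by Assumption~\ref{ass1} the up-rate $A(s)\lambda$ is strictly positive for every $s<S_{\max}$ and every admissible $\lambda$ (since $A(s)>0$ and $\lambda\geq\lambda_{\min}>0$), the down-rate $\mu(s)$ is strictly positive for every $s>0$ and does not depend on $\lambda$, and the no-event probability is strictly positive for any $\Delta>0$. Reading the $\hat\beta$-label off Eq.~(\ref{Pab}) --- an exit forces $\hat\beta=1$, while an entry or a no-event gives $\hat\beta=0$ --- the qualitative moves available from every state, for every action, are: (i) the no-event move $(s,\cdot)\to(s,0)$; (ii) a single entry $(s,\cdot)\to(s+1,0)$ for $s<S_{\max}$; and (iii) a single exit $(s,\cdot)\to(s-1,1)$ for $s>0$. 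Since these moves are available under every admissible action, strong irreducibility and strong aperiodicity (in the sense of \cite{Chen}, i.e. uniformly over the controls) reduce to verifying ordinary irreducibility and aperiodicity using only this common set of moves.

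For strong aperiodicity I would exhibit a self-loop available under every action. The state $(0,0)$ works: from $S_k=0$ the exit rate is $\mu(0)=0$, so with probability $e^{-\lambda\Delta}\geq e^{-\lambda_{\max}\Delta}>0$ no event occurs and $(0,0)\to(0,0)$. This positive self-transition, bounded away from zero uniformly in $\lambda$, forces the period to be $1$ under any policy.

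For strong irreducibility I would show that moves (i)--(iii) connect every pair of augmented states. First, $(s,1)\to(s,0)$ by the no-event move (i), so the $\hat\beta$-coordinate can always be reset to $0$. Climbing with (ii) reaches $(S_{\max},0)$ from any $(s,0)$, while descending with (iii) produces $(s-1,1)$; alternating (ii) and (iii) and resetting with (i) then reaches $(s',0)$ for every $s'$ and $(s',1)$ for every $s'<S_{\max}$ (the latter via a final exit from $s'+1$). The only state invisible to single-event moves is $(S_{\max},1)$: arriving at $S_{\max}$ requires an entry or a no-event, both yielding $\hat\beta=0$. I would handle this boundary state by invoking a \emph{two-event} within-slot transition: starting from $S_{\max}$ the chain must first jump down (an exit to $S_{\max}-1$, since the up-rate vanishes at $S_{\max}$) and may then jump back up (an entry, legal because $A(S_{\max}-1)>0$), giving net change $0$ with one exit. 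Hence $p_{S,\hat B}^{(\Delta)}(S_{\max},1\,|\,S_{\max},\lambda)\geq p_{A,B}^{(\Delta)}(1,1\,|\,S_{\max},\lambda)=g_{1,1}(\Delta,S_{\max},\lambda)>0$, so $(S_{\max},0)\to(S_{\max},1)$ with positive probability, completing the connectivity argument.

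I expect $(S_{\max},1)$ to be the main obstacle, as it is the one state that escapes the leading-order kernel and forces a direct argument about the continuous-time birth--death dynamics inside a slot to certify that $g_{1,1}(\Delta,S_{\max},\lambda)$ is strictly positive, rather than merely $o(\Delta^2)$. A secondary point to make explicit is that all the positivity statements above hold \emph{uniformly} in $\lambda\in[\lambda_{\min},\lambda_{\max}]$, which is exactly what upgrades plain irreducibility and aperiodicity to their \emph{strong} counterparts, and is precisely where the physical constraint $\lambda_{\min}>0$ is indispensable.
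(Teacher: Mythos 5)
Your argument is sound, but there is essentially nothing in the paper to compare it against: the authors never actually prove Proposition~\ref{stronglyirr} --- the appendices cover only Propositions~2--5 and Theorem~1, and the text following the statement merely paraphrases what strong irreducibility and strong aperiodicity mean for the associated directed graph. Your proposal therefore supplies a proof where the paper supplies none, and the structure you chose is the natural one: isolate the one-step moves whose probability is positive under \emph{every} admissible $\lambda$ (no event; single entry for $s<S_{\max}$; single exit for $s>0$), observe that positivity uniform over $\lambda\in[\lambda_{\min},\lambda_{\max}]$ is exactly what upgrades the properties to their ``strong'' versions, get aperiodicity from the self-loop at $(0,0)$ (where $\mu(0)=0$ kills the exit rate), and get connectivity by climbing with entries and descending with exits, resetting the $\hat\beta$-coordinate via the no-event move. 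You also correctly identified the one genuinely delicate point: $(S_{\max},1)$ is unreachable by any single-event transition, since landing on $S_{\max}$ at a slot boundary while registering $\hat\beta=1$ forces at least one entry \emph{and} one exit within the same slot. Your resolution --- exhibiting the within-slot sample path $S_{\max}\to S_{\max}-1\to S_{\max}$ (one exit, then one entry, both at strictly positive rates by Assumption~\ref{ass1} together with $\lambda_{\min}>0$) to conclude that $g_{1,1}(\Delta,S_{\max},\lambda)$ is strictly positive even though it is $o(\Delta^2)$ --- is exactly right, and positivity for fixed $\Delta>0$ is all that irreducibility requires. The only caveat is cosmetic: the final step should be checked against the verbatim definitions of strong irreducibility and strong aperiodicity in the cited reference, but your uniform-in-$\lambda$ formulation is the standard sufficient condition and matches the interpretation the authors themselves give immediately after the proposition.
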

The significance is that,
 if we view the Markov chain as a random walk on a directed graph, this directed graph is always strongly connected and all its states are of period one, irrespective of the input distribution generating $\{(S_k,\hat\beta_{k-1}),\ k\geq 0\}$ \cite{Chen}.
 In particular, under a stationary Markov input distribution,
 whose optimality is proved in \cite{Chen},
  the resulting Markov chain $\{(S_k,\hat\beta_{k-1}),\ k\geq 0\}$ is stationary and ergodic.

The capacity of finite-state Markov channels is studied in~\cite{Chen}, for the case where the encoder 
is provided with the output feedback $\hat\beta_k$. Therein,
 it is proved that, for 
\emph{strongly irreducible and aperiodic} Markov channels, the feedback capacity specialized to our case is given by
\begin{align}
\label{capacity}
C_\Delta^*=\max_{\nu}
\sum_{s=0}^{S_{\max}}\sum_{b\in\{0,1\}}\pi_{\nu}^{(\Delta)}(s,b)I_\nu^{(\Delta)}(s,b),
\end{align}
and is achieved by a stationary Markov input distribution $\nu_\Delta^*$ (the optimizer of (\ref{capacity})), which maps the current state 
$(S_k,\hat\beta_{k-1})=(s,b)$
to a probability distribution over the input signal, $\nu(\lambda|s,b)$.
The terms $\pi_\nu^{(\Delta)}(s,b)$ and $I_\nu^{(\Delta)}(s,b)$ are, respectively, the steady-state distribution
and the instantaneous mutual information rate of state
 $(S_k,\hat\beta_{k-1})=(s,b)$, induced by the input distribution $\nu$.
$I_\nu^{(\Delta)}(s,b)$ is defined as
\begin{align}
\label{ISB}
I_\nu^{(\Delta)}(s,b)=\frac{1}{\Delta}I(\lambda_{k};\hat\beta_k,S_{k+1}|S_k=s,\hat\beta_{k-1}=b,\nu(\cdot|s,b)).
\end{align}
Note that, in general, the input signal distribution $\nu$ is a function of the current state $S_k=s$ and of the feedback signal $\hat\beta_{k-1}=b$.
However, the feedback signal $\hat\beta_{k-1}=b$ is not provided to the encoder in our model.
In the following proposition,
we show that the optimal input distribution solution of (\ref{capacity}), denoted by $\nu_\Delta^*$,
is, in fact, independent of $\hat\beta_{k-1}$.
\begin{propos}
\label{lem1}
The optimal input distribution maximizing the capacity  $C_\Delta^*$ in (\ref{capacity}), $\nu_\Delta^*$, is such that
\begin{align}
\nu_\Delta^*(\cdot|s,0)=\nu_\Delta^*(\cdot|s,1),\ \forall s\in\mathcal S,
\end{align}
\emph{i.e.}, it is independent of the feedback signal $\hat\beta_{k-1}=b$.
Under such a distribution, the capacity is given by
\begin{align}
\label{capacity2}
C_\Delta^*=
\max_{\nu}
\sum_{s=0}^{S_{\max}}\pi_{\nu}^{(\Delta)}(s)I_\nu^{(\Delta)}(s),
\end{align}
where the optimization is over the set of stationary Markov input distributions $\nu$ which map the current cable state 
$S_k=s$
to a probability distribution over the input signal, $\nu(\lambda|s)$.
The terms $\pi_\nu^{(\Delta)}(s)$ and $I_\nu^{(\Delta)}(s)$ are, respectively,
 the steady-state distribution
 and the mutual information rate in state $S_k{=}s$, induced by $\nu$.
$I_\nu^{(\Delta)}(s)$ is defined as
\begin{align}
\label{irate}
I_\nu^{(\Delta)}(s)=\frac{1}{\Delta}I(\lambda_{k};\hat\beta_k,S_{k+1}|S_k=s,\nu(\cdot|s)).
\end{align}
\end{propos}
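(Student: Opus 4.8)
The plan is to show that for any feedback-dependent input law $\nu(\lambda|s,b)$ there is a feedback-\emph{independent} law achieving at least the same value of the objective in (\ref{capacity}), so that the maximum is attained by a $\nu$ with $\nu(\cdot|s,0)=\nu(\cdot|s,1)$. The starting point is structural: by the Markov property, the channel $p_{S,\hat B}^{(\Delta)}(s_1,b_1|s,\lambda)$ from the input $\lambda_{k}$ to the pair $(S_{k+1},\hat\beta_k)$, conditioned on $S_k=s$, does \emph{not} depend on the previous feedback $\hat\beta_{k-1}=b$. Consequently, the per-state rate $I_\nu^{(\Delta)}(s,b)$ in (\ref{ISB}) depends on $b$ only through the conditional input law $\nu(\cdot|s,b)$, and, being a mutual information through a \emph{fixed} channel, it is a concave functional of that input law.

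Given $\nu$ with steady-state law $\pi_\nu^{(\Delta)}(s,b)$, I would introduce the averaged, feedback-independent law
\begin{align}
\bar\nu(\lambda|s)=\frac{\pi_\nu^{(\Delta)}(s,0)\,\nu(\lambda|s,0)+\pi_\nu^{(\Delta)}(s,1)\,\nu(\lambda|s,1)}{\pi(s)},\qquad \pi(s)\triangleq\sum_{b}\pi_\nu^{(\Delta)}(s,b).
\end{align}
The key lemma is that $\bar\nu$ \emph{preserves the marginal cable-state distribution} $\pi(\cdot)$. I would prove this by summing the joint stationarity equation for $\pi_\nu^{(\Delta)}$ over the outgoing feedback coordinate: the feedback-marginalized transition collapses to $\int\bar\nu(\lambda|s)\,p_{S}^{(\Delta)}(s_1|s,\lambda)\,\de\lambda$, so $\pi(\cdot)$ satisfies the marginal balance equation induced by $\bar\nu$. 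Lifting this back, the candidate joint law $\tilde\pi(s_1,b_1)=\sum_s\pi(s)\int\bar\nu(\lambda|s)\,p_{S,\hat B}^{(\Delta)}(s_1,b_1|s,\lambda)\,\de\lambda$ is verified to be stationary for the joint chain under $\bar\nu$; by the uniqueness guaranteed by Proposition~\ref{stronglyirr} it is \emph{the} stationary law, and its feedback-marginal is exactly $\pi(\cdot)$.

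With the marginal fixed, the comparison is immediate. Treating $I_\nu^{(\Delta)}(s,b)$ as a concave functional of $\nu(\cdot|s,b)$ and applying Jensen's inequality with weights $\pi_\nu^{(\Delta)}(s,b)/\pi(s)$ gives, for each $s$,
\begin{align}
\sum_{b}\pi_\nu^{(\Delta)}(s,b)\,I_\nu^{(\Delta)}(s,b)\le \pi(s)\,I_{\bar\nu}^{(\Delta)}(s),
\end{align}
where $I_{\bar\nu}^{(\Delta)}(s)$ is the rate (\ref{irate}) under $\bar\nu$. Summing over $s$ and using that under the feedback-independent $\bar\nu$ the objective collapses to $\sum_s\pi(s)I_{\bar\nu}^{(\Delta)}(s)$ (since the rate no longer depends on $b$ and the feedback-marginal is $\pi$), the objective in (\ref{capacity}) evaluated at $\nu$ is upper bounded by its value at $\bar\nu$. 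Hence some feedback-independent law is optimal, the feedback variable may be dropped, and the reduced formula (\ref{capacity2}) follows.

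I expect the main obstacle to be the marginal-preservation lemma: one must check carefully that marginalizing the feedback coordinate of the joint stationarity equation reproduces \emph{exactly} the balance equation of the averaged chain, and then invoke uniqueness from Proposition~\ref{stronglyirr} to identify the lifted stationary law with $\tilde\pi$. Once this is in place, the concavity/Jensen step is routine, relying only on concavity of mutual information in the input distribution for a fixed channel.
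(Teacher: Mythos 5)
Your proposal is correct and follows essentially the same route as the paper's Appendix~A: you construct the same steady-state-weighted average $\bar\nu(\lambda|s)$ (the paper's $\tilde\nu$ in (\ref{tilmu})), establish that it preserves the stationary distribution (the paper verifies this by direct substitution into the joint balance equations rather than by marginalize-then-lift, but this is a cosmetic difference), and conclude via concavity of the per-state mutual information in the input law together with Jensen's inequality.
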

\begin{proof}
See Appendix A.
\end{proof}

That a distribution $\nu(\cdot|s)$ independent of the feedback $\hat\beta_{k-1}$ is optimal can be intuitively explained:
the state $S_k$ is an \emph{information state}, and captures all information about the past,
so that knowledge of $\hat\beta_{k-1}$ is irrelevant; moreover,
since the instantaneous information rate $I_\nu^{(\Delta)}(S_k,\hat\beta_{k-1})$
 is a concave function of the input distribution, 
 it is maximized by the input distribution $\nu(\cdot|S_k)$ independent of the feedback signal $\hat\beta_{k-1}$,
rather than by randomizing between $\nu(\cdot|S_k,0)$ and $\nu(\cdot|S_k,1)$, depending on the value of $\hat\beta_{k-1}\in\{0,1\}$.
The implication is that the capacity $C_{\Delta}^*$  in Eq. (\ref{capacity}) is
achievable only with the CSI $S_k$ available at the encoder,
but without feedback $\hat\beta_{k-1}$, as it is assumed in our model.

In this paper, we are interested in the analysis of the asymptotic regime $\Delta\to 0$,
thus extending the capacity of \emph{static} Poisson channels \cite{Wyner} to our \emph{dynamic} setting.
The asymptotic capacity, defined as
\begin{align}
C^*\triangleq\lim_{\Delta\to 0}C_\Delta^*,
\end{align}
 is characterized in Proposition~\ref{lem2}.
\begin{propos}
\label{lem2}
The asymptotic capacity $C^*$ is given by
\begin{align}
\label{cap2}
C^*
=
\max_{\nu}
\sum_{s=0}^{S_{\max}}\pi_{\bar\lambda}(s)I_\nu(s),
\end{align}
where we have defined the asymptotic mutual information rate as
\begin{align}
\label{asymrate}
I_\nu(s){\triangleq}\lim_{\Delta\to 0}I_\nu^{(\Delta)}(s)
{=}
A(s)\int_{\lambda_{\min}}^{\lambda_{\max}}\nu(\lambda|s)\lambda
\log_2\left(\frac{\lambda}{\bar\lambda(s)}\right)\mathrm d\lambda,
\end{align}
and the asymptotic steady-state distribution as
\begin{align}
\label{SSD}
\pi_{\bar\lambda}(s)\triangleq \lim_{\Delta\to 0}\pi_\nu^{(\Delta)}(s)
=
\prod_{j=0}^{s-1}\frac{A(j)\bar\lambda(j)}{\mu(j+1)}
\pi_{\bar\lambda}(0),
\end{align}
where
\begin{align}
\label{SSD0}
\pi_{\bar\lambda}(0)\triangleq \lim_{\Delta\to 0}\pi_\nu^{(\Delta)}(0)
=
\frac{1}{1+\sum_{t=1}^{S_{\max}}\prod_{j=0}^{t-1}\frac{A(j)\bar\lambda(j)}{\mu(j+1)}},
\end{align}
and we have defined the average desired input electron intensity
\begin{align}
\label{barlambda}
\bar\lambda(s)
\triangleq
\int_{\lambda_{\min}}^{\lambda_{\max}}\nu(\lambda|s)\lambda\mathrm d\lambda.
\end{align}
\end{propos}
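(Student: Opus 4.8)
The plan is to start from the discrete-time capacity formula \eqref{capacity2} of Proposition~\ref{lem1} and take the limit $\Delta\to 0$ in three stages: (i) compute the pointwise limit of the per-state information rate $I_\nu^{(\Delta)}(s)$; (ii) compute the pointwise limit of the steady-state distribution $\pi_\nu^{(\Delta)}(s)$; and (iii) justify interchanging this limit with the maximization over $\nu$. For a fixed input distribution $\nu$ both limits are computed explicitly, and the interchange in (iii) then follows from uniform convergence over the (compact) set of admissible $\nu$, which hinges crucially on the assumptions $\lambda_{\min}>0$ and $\mu(s)>0$ for $s>0$.

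For (i), I would evaluate the conditional mutual information in \eqref{irate} through the representation $I(\lambda_k;Z)=\int\nu(\lambda|s)\sum_z p(z|\lambda)\log_2\frac{p(z|\lambda)}{p_\nu(z)}\,\mathrm d\lambda$, where $Z=(\hat\beta_k,S_{k+1})$. By \eqref{Pab}, conditioned on $S_k=s$ the only outputs of probability order $\Delta$ are $z=(0,s)$ with probability $1-\Delta\mu(s)-\Delta A(s)\lambda$, $z=(0,s+1)$ with probability $\Delta A(s)\lambda$, and $z=(1,s-1)$ with probability $\Delta\mu(s)$; every other output has probability $o(\Delta^2)$ and contributes $o(\Delta)$ to the mutual information. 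The output $(1,s-1)$ has $\lambda$-independent probability and thus contributes nothing. The key calculation is that the $(0,s)$ term vanishes to first order: writing $u(\lambda)=\Delta[\mu(s)+A(s)\lambda]$ and $\bar u=\Delta[\mu(s)+A(s)\bar\lambda(s)]$ and expanding $\log_2\frac{1-u(\lambda)}{1-\bar u}$, the $O(\Delta)$ contribution is proportional to $\int\nu(\lambda|s)[u(\lambda)-\bar u]\,\mathrm d\lambda=0$, by the very definition of $\bar\lambda(s)$ in \eqref{barlambda}, leaving an $O(\Delta^2)$ remainder. Hence only the $(0,s+1)$ term survives, giving $I_\nu^{(\Delta)}(s)=A(s)\int\nu(\lambda|s)\lambda\log_2\frac{\lambda}{\bar\lambda(s)}\,\mathrm d\lambda+o(1)$ and thus the limit \eqref{asymrate}.

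For (ii), note that by Proposition~\ref{lem1} the input is drawn from $\nu(\cdot|S_k)$ independently of $\hat\beta_{k-1}$, so $\{S_k\}$ is itself Markov with kernel $P^{(\Delta)}(s_1|s_0)=\int\nu(\lambda|s_0)p_S^{(\Delta)}(s_1|s_0,\lambda)\,\mathrm d\lambda$. From \eqref{ps}--\eqref{11} this kernel has the form $P^{(\Delta)}=I+\Delta Q+R^{(\Delta)}$ with $R^{(\Delta)}=o(\Delta)$ entrywise, where $Q$ is the generator of a birth-death process with up-rates $A(s)\bar\lambda(s)$ and down-rates $\mu(s)$ (irreducible on $\{0,\dots,S_{\max}\}$ by Assumption~\ref{ass1}). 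Rewriting the balance equation $\pi_\nu^{(\Delta)}(P^{(\Delta)}-I)=0$ as $\pi_\nu^{(\Delta)}Q+\pi_\nu^{(\Delta)}(R^{(\Delta)}/\Delta)=0$ and passing to the limit along a convergent subsequence (the probability simplex being compact), any limit point $\pi$ satisfies $\pi Q=0$. Since $Q$ is an irreducible birth-death generator it is reversible, and its unique stationary law obeys the detailed-balance recursion $\pi(s)A(s)\bar\lambda(s)=\pi(s+1)\mu(s+1)$, whose solution is precisely the product form \eqref{SSD}--\eqref{SSD0}; uniqueness then removes the subsequence. Working with the generator $Q$ rather than $P^{(\Delta)}$ directly sidesteps the degeneracy $P^{(\Delta)}\to I$, which would otherwise render the limiting transition matrix uninformative.

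Finally, for (iii), I would upgrade both pointwise limits to uniform convergence over $\nu$. Because $\lambda$ ranges over the compact interval $[\lambda_{\min},\lambda_{\max}]$ with $\lambda_{\min}>0$, the integrand $\lambda\log_2(\lambda/\bar\lambda(s))$ is uniformly bounded and the error terms in \eqref{Pab} are uniformly $o(\Delta^2)$, so $I_\nu^{(\Delta)}(s)\to I_\nu(s)$ and $\pi_\nu^{(\Delta)}(s)\to\pi_{\bar\lambda}(s)$ uniformly in $\nu$; the product form also depends continuously on $\bar\lambda(s)\in[\lambda_{\min},\lambda_{\max}]$, keeping the ratios $A(s)\bar\lambda(s)/\mu(s+1)$ bounded away from degeneracy. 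Uniform convergence of the bounded objective $f_\Delta(\nu)=\sum_s\pi_\nu^{(\Delta)}(s)I_\nu^{(\Delta)}(s)$ to $f(\nu)=\sum_s\pi_{\bar\lambda}(s)I_\nu(s)$ then yields $|\max_\nu f_\Delta-\max_\nu f|\le\sup_\nu|f_\Delta-f|\to 0$, which is exactly \eqref{cap2}. The main obstacle I anticipate is precisely this last step: controlling the error terms \emph{uniformly} in both $\lambda$ and $\nu$ rather than merely pointwise, together with the $P^{(\Delta)}\to I$ degeneracy in (ii), for which the generator and detailed-balance viewpoint is the clean remedy.
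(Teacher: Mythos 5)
Your proposal is correct, and its two central computations coincide with the paper's: for the information rate you identify the same three order-$\Delta$ output events $(0,s)$, $(0,s{+}1)$, $(1,s{-}1)$, kill the first via $\int\nu(\lambda|s)[u(\lambda)-\bar u]\,\mathrm d\lambda=0$ (the paper reaches the identical cancellation $\int x(\lambda)A(s)(\lambda-\bar\lambda(s))\,\mathrm d\lambda=0$ by L'Hopital), discard the $\lambda$-independent event $(1,s{-}1)$, and retain only $(0,s{+}1)$; and for the stationary law you land on the same detailed-balance recursion $\pi(s)A(s)\bar\lambda(s)=\pi(s{+}1)\mu(s{+}1)$. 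Where you diverge is in how you reach that recursion and in what you bother to justify. The paper works directly with the cut-based balance equations between $\{0,\dots,s-1\}$ and $\{s,\dots,S_{\max}\}$, isolates the two adjacent-transition terms, shows the remainder is $o(\Delta^2)$, and divides by $\Delta$; you instead write $P^{(\Delta)}=I+\Delta Q+o(\Delta)$, extract a convergent subsequence of $\pi_\nu^{(\Delta)}$ on the simplex, and conclude that any limit point annihilates the irreducible birth--death generator $Q$. Your generator route is cleaner about the degeneracy $P^{(\Delta)}\to I$ and, via the subsequence argument, actually establishes that the limit $\lim_{\Delta\to 0}\pi_\nu^{(\Delta)}(s)$ exists rather than assuming it, which the paper leaves implicit. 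You also add a third step absent from the paper: the uniform-in-$\nu$ convergence needed to exchange $\lim_{\Delta\to 0}$ with $\max_\nu$ and pass from (\ref{capacity2}) to (\ref{cap2}). The paper only computes the pointwise limits $I_\nu(s)$ and $\pi_{\bar\lambda}(s)$ for fixed $\nu$ and treats the interchange as self-evident, so on this point your argument is strictly more complete; the uniformity claim is plausible given $\lambda_{\min}>0$ and the finite state space, though to be fully airtight you would want to verify that the $g_{a,b}(\Delta,s,\lambda)$ error bounds in (\ref{Pab}) are uniform over $\lambda\in[\lambda_{\min},\lambda_{\max}]$, which holds here since they arise from Poisson tail probabilities with rates bounded by $\mu_{\max}+\lambda_{\max}$.
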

\begin{proof}
See Appendix B.
\end{proof}
Note that the steady-state distribution (\ref{SSD})
is a function of $\bar\lambda$ only. Since we are considering 
infinitely small time-slot durations $\Delta \to 0$,
the input signal $\{\lambda_k\}$ averages out over short time intervals, 
hence only its expected value $\bar\lambda(s)$ 
affects state transitions, rather than its specific distribution.

The capacity optimization problem in Eq. (\ref{cap2}) highlights the following trade-off:
the optimal input signal should, on the one hand,
achieve high instantaneous information rate $I_\nu(S_k)$ (\emph{e.g.}, by employing the myopic distribution
$\nu(\cdot|s)=\arg\max I_\nu(s),\ \forall s$);
on the other hand, it should be designed in such a way as to 
favor the occurrence  of states characterized by large instantaneous information rate.
These two goals are in tension. In fact, the instantaneous information rate is maximum in states with large clogging state $A(s)\simeq 1$, \emph{i.e.},
when $S_k$ is small and the bacterial cable is deprived of electrons.
Visits to these states are achieved more frequently by choosing $\lambda_k=\lambda_{\min}$ with probability one.
However, under this deterministic input distribution the instantaneous information rate is zero.
The optimization problem in (\ref{cap2}) can be interpreted as a MDP,
with state space $\mathcal S$, action $\nu(\cdot|s)$ in each state (each action is a probability distribution over
the input signal $\lambda_{k}\in[\lambda_{\min},\lambda_{\max}]$),
the reward $r(\nu(\cdot|s),s)$ under action $\nu(\cdot|s)$ in state $s$ is given by
\begin{align}
r(\nu(\cdot|s),s)
=
A(s)\int_{\lambda_{\min}}^{\lambda_{\max}}\nu(\lambda|s)\lambda
\log_2\left(\frac{\lambda}{\bar\lambda(s)}\right)\mathrm d\lambda,
\end{align}
and transition probability from state $S_k=s_0$ to $S_{k+1}=s_1$ under action $\nu(\cdot|s_0)$ given by
\begin{align}
\label{txprob}
&\mathbb P(S_{k+1}=s_1|S_k=s_0,\nu(\cdot|s_0))
=
\left\{
\begin{array}{ll}
1-\delta\mu(s_0)-\delta A(s_0)\bar\lambda(s_0) & s_1=s_0,\\
\delta A(s_0)\bar\lambda(s_0)& s_1=s_0+1,\\
\delta\mu(s_0) & s_1=s_0-1,\\
0& \text{otherwise},
\end{array}
\right.
\end{align}
where
$\bar\lambda(s)$ is defined in (\ref{barlambda}) and $\delta$ is any constant satisfying 
\begin{align}
\label{deltacond2}
\delta<\frac{1}{\mu_{\max}+\lambda_{\max}},
\end{align}
in order to guarantee a feasible transition probability matrix.
In fact, by solving the steady-state equations with transition probabilities in Eq. (\ref{txprob}), we obtain
 the steady-state distribution $\pi_{\bar\lambda}(s)$ stated in the proposition.
 Note that the transition probabilities in Eq. (\ref{txprob}) are equivalent to 
 $\mathbb E[p_{S}^{(\delta)}(s_1|s_0,\lambda)]$ (see Eq. (\ref{ps})), where the expectation is with respect to $\lambda\sim \nu(\lambda|s_0)$), but without the terms
 $g_{a,b}(\delta,s_0,\lambda)$, which are negligible   in the asymptotic regime $\Delta\to 0$.

Therefore, the optimal input distribution
which maximizes the capacity in Eq. (\ref{cap2}), denoted by $\nu^*(\cdot|s)$, can be determined by using standard MDP algorithms, such as policy iteration \cite{Bertsekas2005}.
Note that, under any $\nu$, the steady-state distribution of $S_k$ is a function of
the average desired input electron intensity $\bar\lambda$ only. 
 Given $\bar\lambda$, there exist an infinite set of distributions $\nu$
 which induce the same average desired input electron intensity $\bar\lambda$ and the same steady-state distribution $\pi_{\bar\lambda}$.
  Mathematically, this set is defined as
 \begin{align}
 \label{setV}
 \mathcal V(\bar\lambda)\equiv
 \left\{
 \nu:
 \int_{\lambda_{\min}}^{\lambda_{\max}}\nu(\lambda|s)\lambda\mathrm d\lambda=\bar\lambda(s),\ \forall s\in\mathcal S
\right\}.
 \end{align}
 Therefore, we have a degree of freedom in optimizing $\nu$ over such set $\mathcal V(\bar\lambda)$, for every possible choice of $\bar\lambda$.
Since all $\nu\in\mathcal V(\bar\lambda)$ induce the same steady-state distribution,
this optimization is equivalent to maximizing the instantaneous mutual information rate $I_\nu(s)$ in each state.
Intuitively, $I_\nu(s)$ is maximized by choosing input symbols $\lambda_k$ which
can be distinguished more clearly at the receiver and are maximally different, \emph{i.e.},
$\lambda_k\in\{\lambda_{\min},\lambda_{\max}\}$.
This is formalized in the following proposition, which proves the optimality of binary input distributions.
\begin{propos}
\label{lem4}
The optimal input distribution has the form
\begin{align}
\label{muopt}
&\!\!\!\!\nu^*(\lambda_{\max}|s){=}\mathbb P(\lambda_{k}=\lambda_{\max}|S_k=s)
{=}\frac{\bar\lambda^*(s)-\lambda_{\min}}{\lambda_{\max}-\lambda_{\min}},
\\
&\!\!\!\!\nu^*(\lambda_{\min}|s){=}\mathbb P(\lambda_{k}=\lambda_{\min}|S_k=s)
{=}\frac{\lambda_{\max}-\bar\lambda^*(s)}{\lambda_{\max}-\lambda_{\min}},
\end{align}
where $\bar\lambda^*(s)$ is the optimal expected desired input electron intensity in state $s$,
defined as the maximizer of the capacity,
\begin{align}
\label{cap3}
C^*
=
\max_{\bar\lambda:\mathcal S \mapsto [\lambda_{\min},\lambda_{\max}]}
\sum_{s=0}^{S_{\max}}\pi_{\bar\lambda}(s) I(\bar\lambda(s),s),
\end{align}
where $I(x,s)$ is given by, for $x\in[\lambda_{\min},\lambda_{\max}]$,
\begin{align}
\label{imu}
I(x,s)
{\triangleq}
A(s)\left[x\log_2\left(\frac{\lambda_{\max}}{x}\right){-}\frac{\rho(\lambda_{\max}{-}x)}{1{-}\rho}\log_2\left(\frac{1}{\rho}\right)\right].
\end{align}
\end{propos}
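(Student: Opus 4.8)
The plan is to exploit the separable structure already emphasized before the statement. By Proposition~\ref{lem2}, the steady-state distribution $\pi_{\bar\lambda}$ in (\ref{SSD}) depends on $\nu$ only through the mean profile $\bar\lambda$, so for a fixed $\bar\lambda$ the outer weights in the objective (\ref{cap2}) are frozen and the maximization decouples across states. First I would rewrite the capacity as
\[
C^*=\max_{\bar\lambda}\sum_{s=0}^{S_{\max}}\pi_{\bar\lambda}(s)\Big(\max_{\nu\in\mathcal V(\bar\lambda)}I_\nu(s)\Big),
\]
where, for each $s$, the inner maximization runs independently over input laws $\nu(\cdot|s)$ supported on $[\lambda_{\min},\lambda_{\max}]$ with prescribed mean $\bar\lambda(s)$, i.e.\ over the set $\mathcal V(\bar\lambda)$ of (\ref{setV}). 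This recasts the problem as a per-state inner problem nested inside a finite-dimensional outer optimization over $\bar\lambda:\mathcal S\mapsto[\lambda_{\min},\lambda_{\max}]$.

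For the inner problem I would start from the rate (\ref{asymrate}). Writing $x=\bar\lambda(s)$ and using the mean constraint $\int\nu(\lambda|s)\lambda\,\mathrm d\lambda=x$ from (\ref{barlambda}), the logarithm splits as $\log_2(\lambda/x)=\log_2\lambda-\log_2 x$, giving
\[
I_\nu(s)=A(s)\Big[\int_{\lambda_{\min}}^{\lambda_{\max}}\nu(\lambda|s)\,\phi(\lambda)\,\mathrm d\lambda-x\log_2 x\Big],\qquad \phi(\lambda)\triangleq\lambda\log_2\lambda.
\]
Since $\phi''(\lambda)=1/(\lambda\ln 2)>0$, the function $\phi$ is convex on $[\lambda_{\min},\lambda_{\max}]$ and therefore lies below the chord joining its endpoint values. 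Integrating this chord bound against $\nu$ and invoking the mean constraint shows that $\int\nu\phi\,\mathrm d\lambda$, over all laws of mean $x$, is maximized by the two-point distribution placing mass only at $\lambda_{\min}$ and $\lambda_{\max}$. The unique weights reproducing mean $x$ are precisely the binary law (\ref{muopt}) with $x=\bar\lambda^*(s)$, and equality in the chord bound holds if and only if $\nu$ is supported on the two endpoints, which establishes both the optimality and the asserted form of the maximizer.

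Having fixed the optimal binary law, I would substitute it into the chord value to evaluate the maximal per-state rate $I(x,s)$, then simplify using $\rho=\lambda_{\min}/\lambda_{\max}$ and $\lambda_{\max}-\lambda_{\min}=\lambda_{\max}(1-\rho)$. Collecting the coefficients of $\log_2\lambda_{\max}$ (which sum to $x$) and of $\log_2\rho$ yields the closed form (\ref{imu}). Feeding $I(\bar\lambda(s),s)$ back into the decoupled objective then collapses the maximization over $\nu$ to one over the mean profile alone, producing (\ref{cap3}) and completing the argument. The main obstacle I expect is the inner maximization, namely justifying cleanly that the expectation of a convex integrand under a fixed-mean constraint is maximized at the boundary two-point distribution and that this maximizer is unique in form; the remaining reduction to (\ref{imu}) is routine but requires care in tracking the $\rho$-substitution.
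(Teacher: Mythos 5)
Your proposal is correct and follows essentially the same route as the paper's Appendix C: decouple the capacity into an outer optimization over the mean profile $\bar\lambda$ and a per-state inner optimization over $\mathcal V(\bar\lambda)$, then use convexity of the integrand (the paper's $z(y)=y\log_2(y/\bar\lambda(s))$, which differs from your $\phi(\lambda)=\lambda\log_2\lambda$ only by a linear term that is immaterial under the fixed-mean constraint) together with the chord bound to show the two-point law at $\{\lambda_{\min},\lambda_{\max}\}$ attains the maximum. The remaining substitution yielding (\ref{imu}) is handled the same way in both arguments.
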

\begin{proof}
See Appendix C.
\end{proof}

A similar result has been proved in \cite{Wyner} for the case of a \emph{static} Poisson channel.
Therein, the optimal input distribution is the \emph{myopic} input distribution, 
studied in Section~\ref{MP}, which maximizes the instantaneous mutual information rate, \emph{i.e.},
\begin{align}
\label{eqmp}
\bar\lambda_{MP}\triangleq\underset{x\in[\lambda_{\min},\lambda_{\max}]}{\arg\max}I(x,s),\ \forall s\in\mathcal S.
\end{align}
(Note from (\ref{imu}) that $\bar\lambda_{MP}$ is independent of $s$; the dependence on $s$ has thus been removed accordingly).
In fact, for a static Poisson channel, the tension between maximizing the instantaneous mutual information rate and inducing a favorable 
steady-state distribution does not arise.
Proposition \ref{lem4} thus represents the extension of~\cite{Wyner} to finite-state Markov channels.

From  Proposition~\ref{lem4},
it follows that the optimization can be restricted only to binary input distributions,
which allocate non-zero probability only to the minimum and maximum input electron intensities,
and zero probability to any intermediate values.
Equivalently, only the expected electron intensity $\bar\lambda(s)$ needs to be optimized in each state,
as the maximizer of (\ref{cap3}), using optimization tools such as policy iteration \cite{Bertsekas2005}.

Note that the myopic policy, defined in (\ref{eqmp}) and studied in Section~\ref{MP},
maximizes the instantaneous mutual information rate $I(x,s)$.
However, in doing so, the myopic policy neglects the steady-state behavior of the cable, and may thus induce frequent visits to states characterized by small clogging state
 $A(S_k)\simeq 0$, \emph{i.e.},  when $S_k$ approaches $S_{\max}$.
On the other hand, the optimal input distribution $\bar\lambda^*(s)$ balances this tension by
giving up part of the instantaneous transfer rate in favor of a better steady-state distribution in the future, \emph{i.e.},
states characterized by large clogging state
 $A(S_k)\simeq 1$ where a larger instantaneous mutual information rate can be achieved.
Note that any input distribution $\bar\lambda(s)$ larger than the myopic one $\bar\lambda_{MP}$
is deleterious to the capacity for the following two reasons:
1) a lower instantaneous mutual information rate is achieved, compared to $\bar\lambda_{MP}$ (by definition of the myopic distribution, which maximizes $I(x,s)$);
2) faster recharges of electrons within the cable are induced,
resulting in frequent clogging of the cable, where the instantaneous mutual information rate is small.
This property is formalized in the following
 theorem, which proves a structural property of $\bar\lambda^*(s)$, by comparing it with 
the myopic distribution.
\begin{thm}
\label{thm1}
The optimal signal distribution $\bar\lambda^*$ 
is such that $\bar\lambda^*(s)\leq\bar\lambda_{MP},\forall s$.
\end{thm}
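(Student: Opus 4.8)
The plan is to read the capacity problem (\ref{cap3}) exactly as the average-reward Markov decision process already identified in the text: state $s\in\mathcal S$, deterministic action $\bar\lambda(s)\in[\lambda_{\min},\lambda_{\max}]$, reward $I(\bar\lambda(s),s)$ of (\ref{imu}), and birth-death kernel (\ref{txprob}). Since the controlled chain is strongly irreducible and aperiodic (Proposition~\ref{stronglyirr}), the model is unichain, so a stationary deterministic optimum exists and the average-reward optimality equation holds for some bias $h:\mathcal S\to\mathbb R$, unique up to an additive constant. Writing $\tilde I(x)\triangleq x\log_2(\lambda_{\max}/x)-\tfrac{\rho(\lambda_{\max}-x)}{1-\rho}\log_2(1/\rho)$ for the bracket of (\ref{imu}), so that $I(x,s)=A(s)\tilde I(x)$, and discarding from the Bellman right-hand side the terms independent of the action, the optimal intensity $\bar\lambda^*(s)$ is, for every $s<S_{\max}$ (where $A(s)>0$), the maximizer of
\begin{align}
\psi_s(x)\triangleq \tilde I(x)+\delta\,[h(s{+}1)-h(s)]\,x.
\end{align}

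A short calculus check records the properties of $\tilde I$ that I need. Because $\tilde I''(x)=-1/(x\ln 2)<0$, $\tilde I$ is strictly concave; substituting $\lambda_{\min}=\rho\lambda_{\max}$ gives $\tilde I(\lambda_{\min})=\tilde I(\lambda_{\max})=0$, so $\tilde I\ge 0$ on $[\lambda_{\min},\lambda_{\max}]$ with a unique interior maximizer, which is precisely $\bar\lambda_{MP}$ of (\ref{eqmp}); hence $\tilde I'(x)>0$ for $x<\bar\lambda_{MP}$ and $\tilde I'(x)<0$ for $x>\bar\lambda_{MP}$. Given these facts the theorem collapses to a single monotonicity statement: if $h$ is non-increasing, then $\psi_s'(\bar\lambda_{MP})=\tilde I'(\bar\lambda_{MP})+\delta[h(s{+}1)-h(s)]=\delta[h(s{+}1)-h(s)]\le 0$, and strict concavity of $\psi_s$ forces its maximizer to obey $\bar\lambda^*(s)\le\bar\lambda_{MP}$. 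At $s=S_{\max}$, $A(S_{\max})=0$ renders the action irrelevant, so the bound holds there trivially.

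The crux is therefore the \emph{key lemma} that the bias $h$ is non-increasing in $s$, which I would establish by showing that every iterate of (relative) value iteration is non-increasing and then passing to the limit. Let $V_{n+1}(s)=\max_x Q_n(s,x)$ with $Q_n(s,x)=A(s)\tilde I(x)+\delta A(s)x[V_n(s{+}1)-V_n(s)]+\delta\mu(s)[V_n(s{-}1)-V_n(s)]+V_n(s)$, started from $V_0\equiv 0$. Assuming inductively that $d_n(s)\triangleq V_n(s)-V_n(s{+}1)\ge 0$, I bound $V_{n+1}(s)-V_{n+1}(s{+}1)$ from below by inserting the maximizer of state $s{+}1$ into $Q_n(s,\cdot)$, which reduces the estimate to $Q_n(s,x)-Q_n(s{+}1,x)$ at a common $x$. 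Collecting terms, this difference equals
\begin{align}
[A(s){-}A(s{+}1)]\tilde I(x)+\delta x A(s{+}1)d_n(s{+}1)+\delta\mu(s)d_n(s{-}1)+d_n(s)\bigl[1-\delta\bigl(xA(s)+\mu(s{+}1)\bigr)\bigr].
\end{align}
Every summand is non-negative: the first by Assumption~\ref{ass1} ($A$ non-increasing) together with $\tilde I\ge 0$; the second and third by the induction hypothesis with $A,\mu\ge 0$; and the last because $xA(s)+\mu(s{+}1)\le\lambda_{\max}+\mu_{\max}$ makes the bracket positive exactly under the slot-length condition (\ref{deltacond2}), $\delta<1/(\mu_{\max}+\lambda_{\max})$. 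Thus $d_{n+1}(s)\ge 0$, the induction closes, and $h(s)=\lim_n[V_n(s)-V_n(s^\dagger)]$ (existing by the unichain, aperiodic property) inherits monotonicity as a limit of non-increasing functions.

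I expect this monotonicity lemma to be the only real obstacle; everything else is bookkeeping. The delicate point is the sign of the coefficient of $d_n(s)$: the adverse contributions $-\delta xA(s)d_n(s)$ and $-\delta\mu(s{+}1)d_n(s)$ from the up- and down-drifts are precisely what condition (\ref{deltacond2}) is built to dominate, so it is essential to keep the discrete parameter $\delta$ in the uniformized kernel rather than passing to continuous time too early. A secondary care point is the boundary: $\mu(0)=0$ and $A(S_{\max})=0$ conveniently annihilate the terms referencing nonexistent neighbours, so I would run the induction over $0\le s\le S_{\max}-1$ with the convention that $d_n(\cdot)$ is always multiplied by a vanishing coefficient at either end.
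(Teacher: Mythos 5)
Your proof is correct, and it shares the paper's key lemma --- the reduction of $\bar\lambda^*(s)\le\bar\lambda_{MP}$ to monotonicity of the value function, $V_n(s)\ge V_n(s+1)$, established by induction over value iteration --- but it executes the induction step by a genuinely different and much cleaner decomposition. The paper writes $V_n(s)-V_n(s+1)$ exactly, with the respective maximizers $\bar\lambda_n(s)$ and $\bar\lambda_n(s+1)$ substituted in, and then runs a two-case analysis (boundary versus interior maximizer) with explicit derivative computations in the arguments $V_{n-1}(s-1)$ and $V_{n-1}(s)$, ultimately reducing the induction step to $I(\bar\lambda_{MP},s)\ge I(\bar\lambda_n(s+1),s+1)$. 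You instead use the standard suboptimality bound $V_{n+1}(s)\ge Q_n(s,x^*_{s+1})$, where $x^*_{s+1}$ is the maximizer at state $s+1$, so that everything reduces to comparing $Q_n(s,\cdot)$ and $Q_n(s+1,\cdot)$ at a \emph{common} action; the resulting difference splits into four manifestly non-negative terms, where $[A(s)-A(s+1)]\tilde I(x)\ge 0$ needs only Assumption \ref{ass1} together with $\tilde I\ge 0$ (which follows from $\tilde I(\lambda_{\min})=\tilde I(\lambda_{\max})=0$ and concavity), and the coefficient of $d_n(s)$ is positive exactly under (\ref{deltacond2}). This buys a substantially shorter argument that never needs the closed form $\bar\lambda_n(s)=\max\{\lambda_{\min},2^{-\delta[V_{n-1}(s)-V_{n-1}(s+1)]}\bar\lambda_{MP}\}$, the threshold constant $\gamma$, or the case split; what the paper's longer route buys in exchange is precisely that closed form, which is reused in the policy-improvement step (\ref{PIS}) of Algorithm \ref{PIA}. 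Your final passage from $d_n\ge 0$ to a non-increasing bias $h$, and thence to $\psi_s'(\bar\lambda_{MP})=\delta[h(s+1)-h(s)]\le 0$ and the conclusion via strict concavity, rests on the same convergence facts from \cite{Bertsekas2005} that the paper invokes when asserting $\bar\lambda_n(s)\to\bar\lambda^*(s)$, so no additional hypotheses are incurred; your handling of the boundary states via $\mu(0)=0$ and $A(S_{\max})=0$ is also consistent with the paper's.
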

\begin{proof}
See Appendix D.
\end{proof}

Theorem \ref{thm1} can be exploited in the capacity optimization problem (\ref{cap3}),
by restricting the optimization only to distributions such that $\bar\lambda(s)\in[\lambda_{\min},\bar\lambda_{MP}]$.
We now present the policy iteration algorithm \cite{Bertsekas2005}, which exploits this fact.

\begin{algo}[\textbf {Policy iteration algorithm}]
\label{PIA}
\begin{enumerate}
\item \emph{Initialization}: initial average desired electron intensity $\bar \lambda^{[0]}$; $i=0$; $\epsilon>0$
\item \emph{Stage $i$, policy evaluation step}:
compute the achievable rate under policy $\bar \lambda^{[i]}$,
 \begin{align}
C^{[i]}
=
\sum_{s=0}^{S_{\max}}\pi_{\bar\lambda^{[i]}}(s) I(\bar\lambda^{[i]}(s),s),
\end{align}
using (\ref{SSD}) and (\ref{imu}).
Let $D^{[i]}(0)=0$, and recursively, for $s\in\{1,2,\dots,S_{\max}\}$,
 \begin{align}
 \label{Di}
\!\!\!\!\!\!\!\!\!\!D^{[i]}(s)=&
\frac{\mu(s-1)D^{[i]}(s-1)+C^{[i]}-I(\bar\lambda^{[i]}(s-1),s-1)}{A(s-1)\bar\lambda^{[i]}(s-1)}.
\end{align}
\item \emph{Stage $i$, policy improvement step}: for each $s\in\mathcal S$, determine a new policy $\bar\lambda^{[i+1]}$ as follows:
\begin{align}
\label{PIS}
\bar\lambda^{[i+1]}(s)=
\left\{
\begin{array}{ll}
\bar\lambda_{MP} & D^{[i]}(s+1)\geq 0,\\
\lambda_{\min} & D^{[i]}(s+1)\leq \log_2(e)+\frac{1}{1-\rho}\log_2(\rho),\\
\bar\lambda_{MP}2^{D^{[i]}(s+1)} &\text{otherwise}.
\end{array}
\right.
\end{align}
\item \emph{Convergence test}: if $C^{[i+1]}-C^{[i]}<\epsilon$, return $\bar\lambda^{[i+1]}$;
otherwise, let $i:=i+1$ and repeat from step 2).
\end{enumerate}
\end{algo}
The following proposition states the optimality of Algorithm \ref{PIA}.
\begin{propos}\label{lem5}
Algorithm \ref{PIA} determines the optimal policy  $\bar\lambda^*$ when $\epsilon\to 0$.
\end{propos}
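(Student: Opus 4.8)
The plan is to recognize Algorithm~\ref{PIA} as the \emph{policy iteration} procedure for the average-reward Markov decision process with transition law (\ref{txprob}) and per-visit reward (\ref{imu}), and then to invoke the standard convergence theory for \emph{unichain} average-reward MDPs \cite{Bertsekas2005}. The enabling structural fact is Proposition~\ref{stronglyirr}: under \emph{any} stationary policy $\bar\lambda$ the induced birth--death chain on $\mathcal S$ is irreducible, hence unichain, so the long-run average reward $C$ is independent of the initial state and the associated \emph{Poisson (Bellman) equation} has a solution $V(\cdot)$, unique up to an additive constant. It then suffices to check that the two inner steps implement \emph{exact} policy evaluation and \emph{greedy} policy improvement, and to close with the monotone-improvement argument.

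First I would verify the \textbf{policy evaluation step}. Dividing the slot-$\delta$ Poisson equation by $\delta$, the balance at an interior state $s$ reads
\begin{align}
\label{pl5poisson}
C = I(\bar\lambda(s),s) + A(s)\bar\lambda(s)\big[V(s{+}1)-V(s)\big] + \mu(s)\big[V(s{-}1)-V(s)\big],
\end{align}
which uses only the tridiagonal structure of (\ref{txprob}). Introducing the increments $D(s)\triangleq V(s)-V(s{-}1)$ and writing (\ref{pl5poisson}) at state $s{-}1$, one solves for $D(s)$ and recovers exactly (\ref{Di}); the initialization $D(0)=0$ is consistent because $\mu(0)=0$ (Assumption~\ref{ass1}) removes the only term in which $D(0)$ would enter. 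Since $C^{[i]}$ is computed directly as the stationary average reward $\sum_s\pi_{\bar\lambda^{[i]}}(s)I(\bar\lambda^{[i]}(s),s)$, the remaining equation at $s=S_{\max}$---where $A(S_{\max})=0$ forces $\mu(S_{\max})D(S_{\max})=-C$---is automatically satisfied, so the recursion returns the true differential values of $\bar\lambda^{[i]}$.

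Next I would verify the \textbf{policy improvement step}. In the Bellman operator the only action-dependent terms at state $s$ are the reward and the up-transition to $s{+}1$, so greedy improvement maximizes $h(x)\triangleq I(x,s)+A(s)\,x\,D^{[i]}(s{+}1)$ over $x$. From (\ref{imu}) one finds $h''(x)=-A(s)\log_2(e)/x<0$, so $h$ is strictly concave and its unconstrained stationary point satisfies $\log_2(\lambda_{\max}/x)=\log_2(\lambda_{\max}/\bar\lambda_{MP})-D^{[i]}(s{+}1)$, i.e.\ $x^\star=\bar\lambda_{MP}\,2^{D^{[i]}(s+1)}$, with the myopic optimizer (\ref{eqmp}) supplying the first-order condition at $D\equiv0$. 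By Theorem~\ref{thm1} the search may be restricted to $[\lambda_{\min},\bar\lambda_{MP}]$; projecting the concave maximizer onto this interval produces precisely the three branches of (\ref{PIS}), since $D^{[i]}(s{+}1)\geq0$ is equivalent to $x^\star\geq\bar\lambda_{MP}$ and $D^{[i]}(s{+}1)\leq\log_2(e)+\tfrac{1}{1-\rho}\log_2(\rho)$ is equivalent to $x^\star\leq\lambda_{\min}$ (using $\lambda_{\min}=\rho\lambda_{\max}$ and the closed form of $\bar\lambda_{MP}$).

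Finally, \textbf{convergence} follows from the policy-improvement lemma for unichain average-reward MDPs: inserting the improved policy into the evaluation identity and using the greedy inequality yields $C^{[i+1]}\geq C^{[i]}$, with equality only when $\bar\lambda^{[i]}$ already maximizes the Bellman operator at every state, i.e.\ solves the optimality equation. Hence $\{C^{[i]}\}$ is nondecreasing and bounded above by $C^*$, so it converges, the stopping test terminates for every $\epsilon>0$, and as $\epsilon\to0$ the returned policy is a fixed point of the improvement map---therefore a solution of the average-reward optimality equation, whose maximizer is $\bar\lambda^*$ by unichain optimality theory. I expect the main obstacle to be this last step: because the action set $[\lambda_{\min},\bar\lambda_{MP}]$ is a continuum, finite termination is not automatic, and one must combine monotone improvement with continuity of $V$ in the policy---together with the strict concavity of $h$ and the explicit closed-form improvement---to rule out convergence to a \emph{suboptimal} fixed point and guarantee $\lim_i C^{[i]}=C^*$.
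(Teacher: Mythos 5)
Your proposal is correct and follows essentially the same route as the paper: the paper likewise delegates convergence to the standard policy-iteration theory of \cite{Bertsekas2005} and devotes Appendix~E to verifying that the Poisson equation under the tridiagonal transition law (\ref{txprob}) reduces to the increment recursion (\ref{Di}) (with $D^{[i]}(s)$ the scaled difference $v^{[i]}(s)-v^{[i]}(s-1)$), and that the greedy step maximizes the concave function $I(x,s)+A(s)xD^{[i]}(s+1)$ over $[\lambda_{\min},\bar\lambda_{MP}]$ (restricted via Theorem~\ref{thm1}), whose stationary point $\bar\lambda_{MP}2^{D^{[i]}(s+1)}$ projected onto the interval yields exactly the three branches of (\ref{PIS}). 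Your added consistency check at $s=S_{\max}$ and the remark on termination with a continuum action set are sensible refinements but do not change the argument.
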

\begin{proof}
The optimality of the policy iteration algorithm is proved in \cite{Bertsekas2005}.
The specific forms of the
 policy evaluation and improvement steps, which exploit the structure of our model, are proved in Appendix~E.
\end{proof}

The algorithm can be initialized with the myopic policy, studied in Section~\ref{MP}.

\subsection{Myopic codebook generation}
\label{MP}
The myopic policy,
defined in (\ref{eqmp}),
 is the expected desired input electron intensity which maximizes the
instantaneous mutual information rate.
Since the mutual information rate $I(x,s)$ is a concave function of $x$,
with $I^\prime(\lambda_{\min},s)>0$ and $I^\prime(\lambda_{\max},s)<0$,
it is  maximized by the unique $x\in(\lambda_{\min},\lambda_{\max})$
such that $I^\prime(x,s)=0$. From (\ref{derivis}), it is given by
\begin{align}
\label{MPexpr}
\bar\lambda_{MP}=\frac{\lambda_{\max}}{e}\rho^{-\frac{\rho}{1-\rho}},
\end{align}
and is independent of $s$ (the argument $s$ has been removed accordingly). The resulting
  achievable rate under this myopic input distribution, denoted by $C_{MP}$, is given by
\begin{align}
C_{MP}=\bar A_{MP}\lambda_{\max}
\left[
 \frac{1}{e}\rho^{-\frac{\rho}{1-\rho}}
\log_2(e)
+\frac{\rho}{1-\rho}\log_2(\rho)
\right],
\end{align}
where $\bar A_{MP}=\sum_s\pi_{MP}(s)A(s)$ is the average clogging state and
 $\pi_{MP}(s)$ is the steady-state distribution under the myopic policy.
 Note that $\bar\lambda_{MP}$ only maximizes the instantaneous information rate,
without taking into account its effect on the
steady-state distribution, so that the resulting average clogging state $\bar A_{MP}$, and the capacity $C_{MP}$,  may be small.

\section{Numerical Results}
\label{numres}
In this section, we present numerical results.
We consider a cable with electron capacity $S_{\max}=1000$.
The clogging state $A(s)$ and output rate $\mu(s)$ are given by\footnote{The specific choices of $A(s)$ and $\mu(s)$ have been discussed with 
Prof. M. Y. El-Naggar and S. Pirbadian, Department of Physics and Astronomy, University of Southern California, Los Angeles, USA.}
\begin{align}
\label{A}
&A(s)=
\chi(s<S_{\max})\left[
1-(1-A_{\min})\frac{s}{S_{\max}}\right],
\\
&\mu(s)=0.6+0.8\frac{s}{S_{\max}},
\end{align}
where $A_{\min}$ is the \emph{minimum clogging state value}, taking values in the range $[0,1]$.
Note that $A(s)$ is peculiar to the bacterial cable, since it captures 
saturation effects occurring locally within each cell, and causing the overall cable to clog.

 \begin{figure}[t]
\centering
\includegraphics[width = .8\linewidth,trim = 10mm 4mm 10mm 9mm,clip=false]{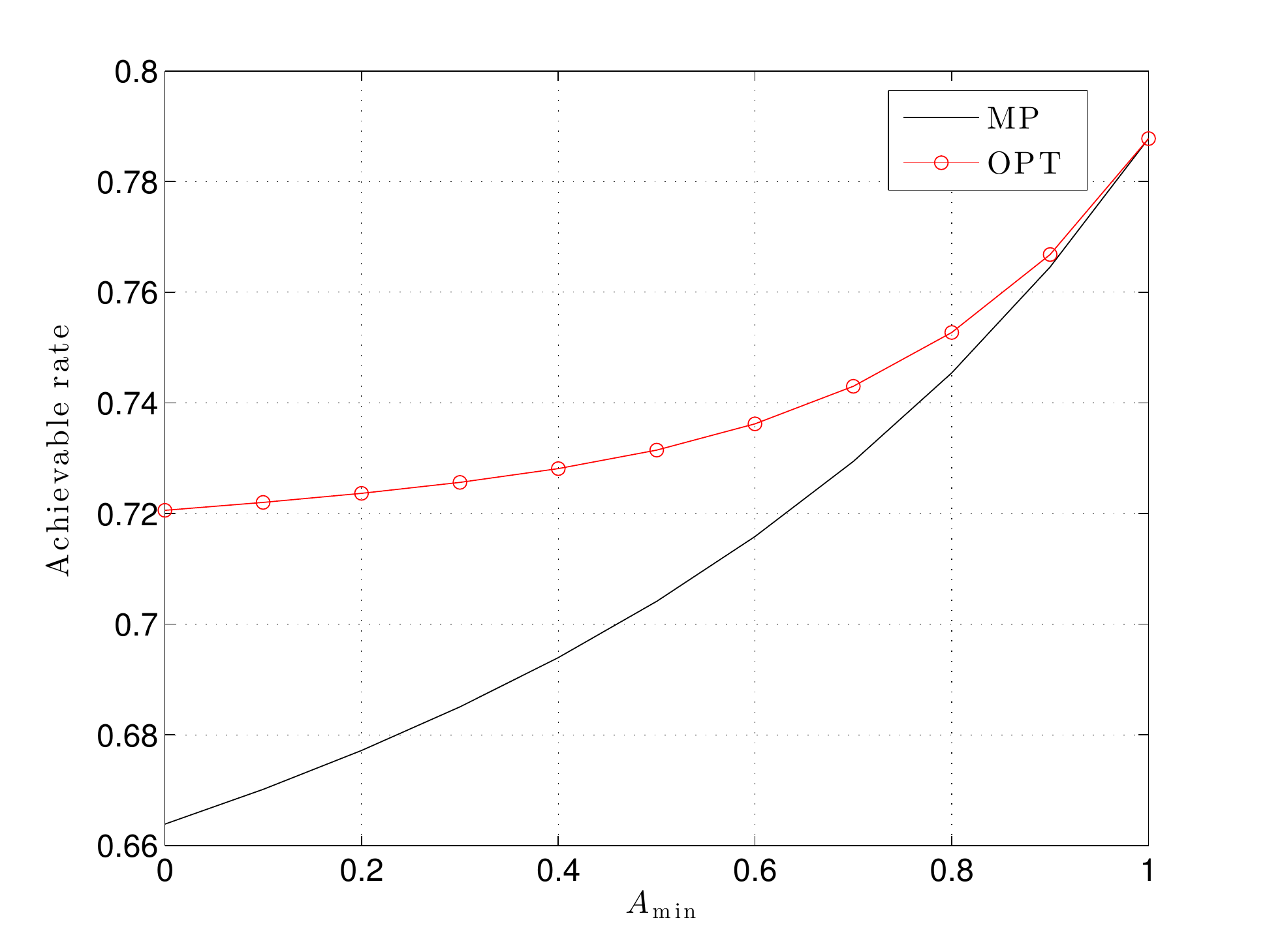}
\caption{Achievable information rate as a function of the minimum clogging state value, $A_{\min}$, for the
MP and OPT input distributions.}
\label{fig1}
\end{figure}

In Fig. \ref{fig1}, we evaluate the achievable information rate under the optimal input distribution (OPT),
computed with the policy iteration algorithm \cite{Bertsekas2005},
and the myopic input distribution (MP).
We plot the achievable information rate as a function of $A_{\min}\in[0,1]$.
We note that the achievable information rate increases with $A_{\min}$ for both schemes.
This is because, as $A_{\min}$ increases, both $A(s)$ and the instantaneous information rate $I(x,s)$ increase as well (see Eqs. (\ref{A}) and (\ref{imu})).
Intuitively,  the larger $A(s)$, the better the ability of the  bacterial cable to transport electrons.

OPT outperforms MP by $\sim 9\%$ for small values of $A_{\min}$. This can be explained with the help of Fig. \ref{fig2}, for the case $A_{\min}{=}0$.
In this case, clogging is severe when the cable state approaches the maximum value $S_{\max}$. For instance, 
if $S_k>S_{\max}/2$, then $A(S_{k})<0.5$ and the rate of electrons entering the cable is less than halved, resulting in 
low instantaneous information rate. Therefore, in order to achieve high instantaneous information rate, the
state of the cable $S_k$ should be kept small, \emph{e.g.}, below $S_{\max}/2$.
MP greedily maximizes the instantaneous information rate, but this action results in an unfavorable steady-state distribution, 
such that the cable is often in large queue states $S_k>S_{\max}/2$, where the clogging is significant ($A(S_k)<0.5$) and most electrons are discarded at the cable input.
On the other hand, OPT gives up some instantaneous information rate  in order to favor the occupancy of 
low queue states, where $A(s)$ approaches one and the transfer of information is maximum.
Note that MP does not require CSI at the encoder, as seen in (\ref{MPexpr}). Therefore, Fig. \ref{fig2} demonstrates the 
importance of CSI, which enables the adaptation of the input signal based on the state of the cable.
Finally, note that OPT and MP approach the same value of the average information rate when $A_{\min}{\to}1$. In fact, in this case
the instantaneous information rate $I(x,s)$ is the same in all states, except $S_{\max}$, where $A(S_{\max}){=}0$ and $I(x,S_{\max}){=}0$.
However, state $S_{\max}$ is visited very infrequently, resulting in a negligible degradation of MP with respect to OPT.

 \begin{figure}[t]
\centering
\includegraphics[width = .8\linewidth,trim = 10mm 4mm 10mm 9mm,clip=false]{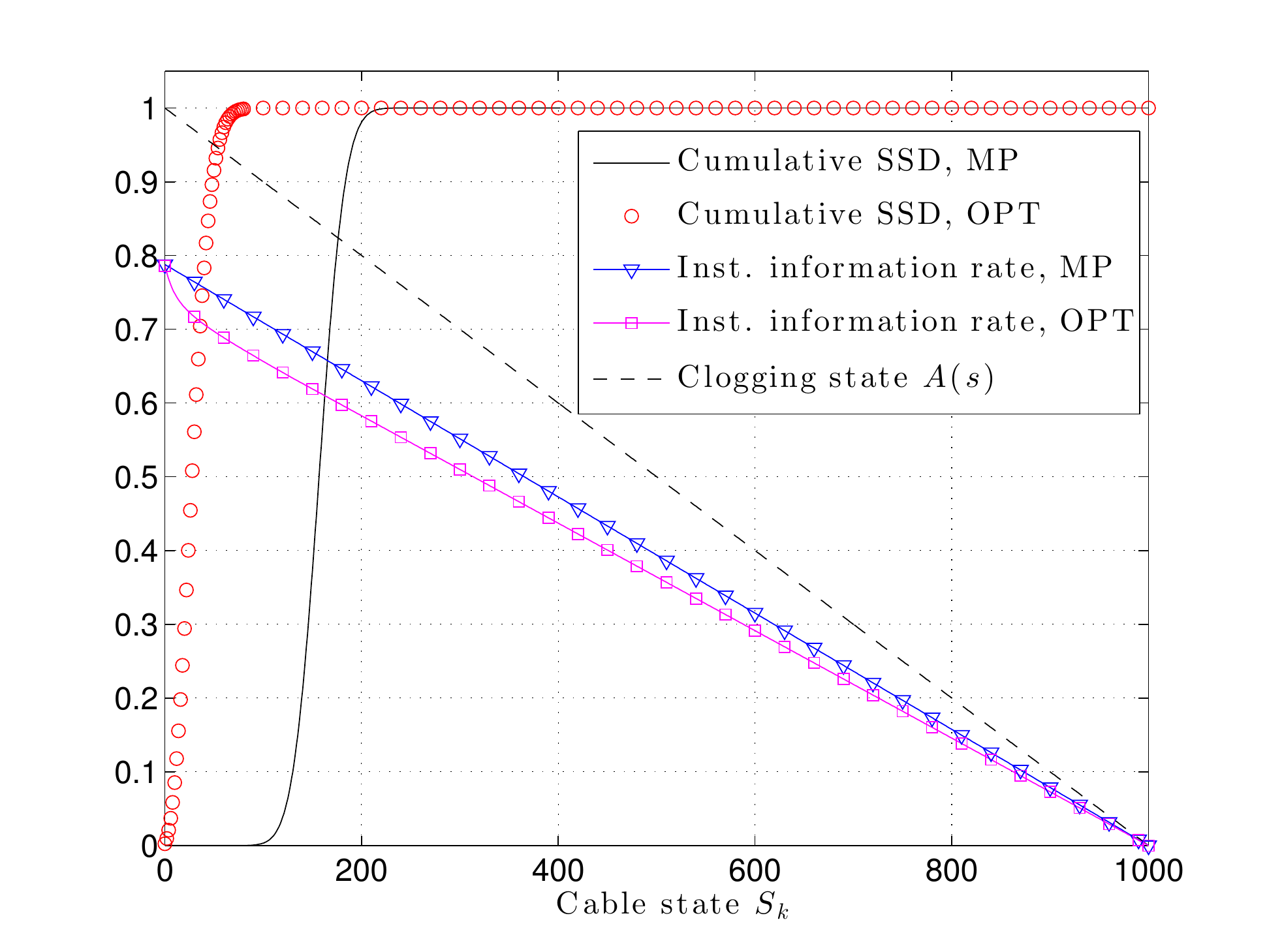}
\caption{Cumulative steady-state distribution (SSD) and 
instantaneous information rate for the MP and OPT input distributions; $A_{\min}=0$.
}
\label{fig2}
\end{figure}

\section{Conclusions}
\label{concl}
In this paper, we have studied the capacity of bacterial cables via electron transfer,
for the case where both  the encoder and the decoder have full causal CSI.
We have studied a discrete-time
version of the system, which enables  the computation of an achievable rate for the continuous-time system, based on known results on the capacity of finite-state Markov channels.
We have analyzed the regime of asymptotically small time-slot duration, 
based on which we have established the optimality of binary Markov input distributions, which are functions of the current state only.
We have shown that 
 the optimal distribution optimally balances the tension between optimizing the instantaneous mutual information rate,
and inducing a favorable steady-state distribution, \emph{e.g.}, to states characterized by large clogging state,
and we have proved that it is smaller than that given by the \emph{myopic policy}, which greedily 
maximizes the instantaneous information rate neglecting its effect on the steady-state distribution of the cable.
We have shown that the optimal probability that generates the binary intensity signal as a function of the current cable state 
can be determined efficiently via the policy iteration algorithm.
Our numerical evaluations
 reveal the importance of CSI, which enables adaptation
of the input signal to the state of the cell, and thus motivates further research on the design of more practical schemes, where CSI is only partially available,
and of state estimation techniques.

This work represents a first contribution towards the design of electron signaling schemes in complex microbial structures, \emph{e.g.}, bacterial cables and biofilms,
 where the tension between maximizing the transfer of information and guaranteeing the well-being of the overall community arises,
and thus motivates further research in the design of signaling schemes in
large-scale microbial systems and bio-films, \emph{e.g}, using methods based on statistical physics~\cite{Mitra}.

\section{Acknowledgments}
The authors would like to thank Prof. M. Y. El-Naggar and S. Pirbadian for their 
valuable insights on the biological processes of bacterial cables,
and Prof. M. N. Kenari for her valuable insights on information theoretic aspects related to
microbial communications.

\section*{Appendix A: Proof of Proposition \ref{lem1}}
Let $\nu$ be a stationary Markov input distribution with feedback,
which maps the current state $(S_k,\hat\beta_{k-1})=(s,b)$
to a probability distribution over the input signal, $\nu(\lambda|s,b)$,
but is independent of the past \cite{Chen}.
From Proposition~\ref{stronglyirr},
the state sequence $\{(S_k,\hat\beta_{k-1}),\ k\geq 0\}$ is a strongly irreducible and aperiodic Markov chain, controlled by $\nu$.
Given $(S_{k-1},\hat\beta_{k-2})=(s^\prime,b^\prime)$, which occurs with steady-state probability
$\pi_{\nu}^{(\Delta)}(s^\prime,b^\prime)$, the system moves to state $(S_k,\hat\beta_{k-1})=(s,b)$
with probability 
$\int_{\lambda_{\min}}^{\lambda_{\max}}\nu(\lambda|s^\prime,b^\prime) p_{S,\hat B}^{(\Delta)}(s,b|s^\prime{,}\lambda)\mathrm d\lambda$,
independent of the past.
Therefore, the steady-state distribution of $(S_k,\hat\beta_{k-1})$ induced by the stationary Markov input distribution $\nu$ is
the unique solution of the system of  steady-state equations
\begin{align}
\label{pimu}
\pi_{\nu}^{(\Delta)}(s,b)
=
\sum_{s^\prime,b^\prime}\pi_{\nu}^{(\Delta)}(s^\prime{,}b^\prime)\int_{\lambda_{\min}}^{\lambda_{\max}}\nu(\lambda|s^\prime,b^\prime) p_{S,\hat B}^{(\Delta)}(s,b|s^\prime{,}\lambda)\mathrm d\lambda,\forall s,b.
\end{align}

We now define a new stationary input distribution without feedback, denoted by $\tilde\nu$.
Let, for each $s\in\mathcal S$,
\begin{align}
\label{tilmu}
\tilde\nu(\lambda|s)
=
\frac{\sum_{b}\pi_{\nu}^{(\Delta)}(s,b)\nu(\lambda|s,b)}{\pi_{\nu}^{(\Delta)}(s)},\ \forall \lambda,
\end{align}
where $\pi_{\nu}^{(\Delta)}(s)=\pi_{\nu}^{(\Delta)}(s,0)+\pi_{\nu}^{(\Delta)}(s,1)$.
The steady-state distribution of $(S_k,\hat\beta_{k-1})$ induced by $\tilde\nu$ is the unique solution of the
system of equations
\begin{align}
\label{pitilmu}
\pi_{\tilde\nu}^{(\Delta)}(s,b)
{=}
\sum_{s^\prime,b^\prime}\pi_{\tilde\nu}^{(\Delta)}(s^\prime,b^\prime)\!\!\int_{\lambda_{\min}}^{\lambda_{\max}}\!\!\!\!\tilde\nu(\lambda|s^\prime) p_{S,\hat B}^{(\Delta)}(s,b|s^\prime,\lambda)\mathrm d\lambda.
\end{align}
We now show that $\pi_{\nu}^{(\Delta)}(s,b)$ solves (\ref{pitilmu}), hence $\pi_{\tilde\nu}^{(\Delta)}(s,b)=\pi_{\nu}^{(\Delta)}(s,b),\ \forall s,b$,
\emph{i.e.}, we prove that 
\begin{align}
\label{pitilmu2}
\!\!\!\pi_{\nu}^{(\Delta)}(s,b)
{=}
\sum_{s^\prime,b^\prime}\pi_{\nu}^{(\Delta)}(s^\prime,b^\prime)\!\!\int_{\lambda_{\min}}^{\lambda_{\max}}\!\!\!\!\tilde\nu(\lambda|s^\prime) p_{S,\hat B}^{(\Delta)}(s,b|s^\prime,\lambda)\mathrm d\lambda.
\end{align}
In fact, substituting (\ref{tilmu}) into (\ref{pitilmu2}), we obtain
\begin{align}
&\pi_{\nu}^{(\Delta)}(s,b)
=
\sum_{s^\prime,b^\prime}
\pi_{\nu}^{(\Delta)}(s^\prime,b^\prime)\int_{\lambda_{\min}}^{\lambda_{\max}}
\frac{\sum_{y}\pi_{\nu}(s^\prime,y)\nu(\lambda|s^\prime,y)}{\sum_{y}\pi_{\nu}(s^\prime,y)}
p_{S,\hat B}^{(\Delta)}(s,b|s^\prime,\lambda)\mathrm d\lambda
\nonumber\\&
=
\sum_{s^\prime,y}
\pi_{\nu}^{(\Delta)}(s^\prime,y)\int_{\lambda_{\min}}^{\lambda_{\max}}\nu(\lambda|s^\prime,y)p_{S,\hat B}^{(\Delta)}(s,b|s^\prime,\lambda)\mathrm d\lambda,
\end{align}
and this holds true from (\ref{pimu}).
Therefore, the two policies $\nu$ and $\tilde\nu$ achieve the same steady-state distribution
$\pi_{\nu}^{(\Delta)}(s,b)=\pi_{\tilde\nu}^{(\Delta)}(s,b),\ \forall s, b$.

The mutual information rate $I_\nu^{(\Delta)}(s,b)$ is given by
\begin{align}
\label{Isb2}
I_\nu^{(\Delta)}(s,b)=
\sum_{s^\prime,b^\prime}Z^{(\Delta)}(\nu(\cdot|s,b),s,s^\prime,b^\prime),
\end{align}
where from  (\ref{ISB}) we have defined
\begin{align}
\label{zdelta}
&\!\!\!Z^{(\Delta)}(x,s,s^\prime,b^\prime)=
-\frac{1}{\Delta}\int_{\lambda_{\min}}^{\lambda_{\max}}x(\lambda)p_{S,\hat B}^{(\Delta)}(s^\prime,b^\prime|s,\lambda)
\log_2\left(\frac{\int_{\lambda_{\min}}^{\lambda_{\max}}x(\lambda^\prime)p_{S,\hat B}^{(\Delta)}(s^\prime,b^\prime|s,\lambda^\prime)
\mathrm d\lambda^\prime
}{p_{S,\hat B}^{(\Delta)}(s^\prime,b^\prime|s,\lambda)}\right)\mathrm d\lambda.
\end{align}
It can be shown that  $Z^{(\Delta)}(x,s,s^\prime,b^\prime)$ is a strictly concave function of the input distribution $x$.
Therefore,  using (\ref{Isb2}) we obtain
\begin{align}
&
\sum_{b\in\{0,1\}}
\pi_\nu^{(\Delta)}(s,b)I_\nu^{(\Delta)}(s,b)
=\pi_\nu^{(\Delta)}(s)\sum_{s^\prime,b^\prime}
\sum_{b\in\{0,1\}}
\frac{\pi_\nu^{(\Delta)}(s,b)}{\pi_\nu^{(\Delta)}(s)}Z^{(\Delta)}(\nu(\cdot|s,b),s,s^\prime,b^\prime)
\nonumber\\&
\leq
\pi_\nu^{(\Delta)}(s)
\sum_{s^\prime,b^\prime}
Z^{(\Delta)}\left(
\sum_{b\in\{0,1\}}
\frac{\pi_\nu^{(\Delta)}(s,b)}{\pi_\nu^{(\Delta)}(s)}\nu(\cdot|s,b)
,s,s^\prime,b^\prime\right)
\nonumber\\&
=
\pi_{\tilde\nu}^{(\Delta)}(s)
\sum_{s^\prime,b^\prime}Z^{(\Delta)}(\tilde\nu(\cdot|s),s,s^\prime,b^\prime)
=\pi_{\tilde\nu}^{(\Delta)}(s)I_{\tilde\nu}^{(\Delta)}(s),
\nonumber
\end{align}
with equality if and only if $\nu(\cdot|s,b)=\tilde \nu(\cdot|s), \forall b\in\{0,1\}$,
where $I_{\tilde\nu}^{(\Delta)}(s)$ is given by (\ref{irate}).
 In the last step, we have used the definition of $\tilde\nu$ in (\ref{tilmu}) and the fact that $\pi_{\nu}^{(\Delta)}(s)=\pi_{\tilde\nu}^{(\Delta)}(s)$.
Finally, we obtain
\begin{align}
\sum_{s=0}^{S_{\max}}
\sum_{b\in\{0,1\}}
\pi_{\nu}(s,b)I_\nu(s,b)
=
\sum_{s=0}^{S_{\max}}
\sum_{b\in\{0,1\}}
\pi_{\tilde\nu}(s,b)I_\nu(s,b)
\leq
\sum_{s=0}^{S_{\max}}\pi_{\tilde\nu}(s)I_{\tilde\nu}(s),
\end{align}
with equality if and only if $\tilde\nu(\lambda|s)=\nu(\lambda|s,b),\forall \lambda,s,b$,
so that the input distribution $\tilde\nu$ achieves a larger average information rate than the original input distribution $\nu$.
The proposition is thus proved.

\section*{Appendix B: Proof of Proposition \ref{lem2}}
The mutual information rate $I_\nu^{(\Delta)}(s)$ is 
defined in (\ref{irate}) and is given by
\begin{align}
\label{48}
I_\nu^{(\Delta)}(s)=
\sum_{s^\prime,b^\prime}Z^{(\Delta)}(\nu(\cdot|s),s,s^\prime,b^\prime),
\end{align}
where $Z^{(\Delta)}(\nu(\cdot|s),s,s^\prime,b^\prime)$ is defined in (\ref{zdelta}).
In particular, for $(s^\prime,b^\prime)\notin\{(s,0),(s+1,0),(s-1,1)\}$,
from (\ref{Pab}) 
when $\Delta\to 0$, we have that 
\begin{align}
\lim_{\Delta\to 0}Z^{(\Delta)}(\nu(\cdot|s),s,s^\prime,b^\prime)=0,
\end{align}
since, in this case, $p_{S,\hat B}^{(\Delta)}(s^\prime,b^\prime|s,\lambda)\sim o(\Delta^2)$.
Therefore, it follows that
\begin{align}
&\lim_{\Delta\to 0}
I_\nu^{(\Delta)}(s)
=
\lim_{\Delta\to 0}
\left[
Z^{(\Delta)}(\nu(\cdot|s),s,s,0)
\right.\nonumber\\&\left.+
Z^{(\Delta)}(\nu(\cdot|s),s,s+1,0)+
Z^{(\Delta)}(\nu(\cdot|s),s,s-1,1)
\right].
\end{align}
By neglecting the probability terms of order $o(\Delta^2)$ in (\ref{ps})-(\ref{11}),
from (\ref{zdelta}) we thus obtain
\begin{align}
&Z^{(\Delta)}(\nu(\cdot|s),s,s,0)
{\simeq}
{-}\!\!\!\int_{\lambda_{\min}}^{\lambda_{\max}}\!\!\!\!\!\!\!\!\!x(\lambda)\left(\frac{1}{\Delta}{-}\mu(s){-}A(s)\lambda\right)
\log_2\left(\frac{1/\Delta-\mu(s)- A(s)\bar\lambda(s)}{1/\Delta-\mu(s)-A(s)\lambda}\right)\mathrm d\lambda,
\nonumber\\&\qquad
\to
-\log_2(e)\int_{\lambda_{\min}}^{\lambda_{\max}}x(\lambda)A(s)(\lambda-\bar\lambda(s))
     \mathrm d\lambda=0,\label{sdfsdfsdfsdf}
\\&
Z^{(\Delta)}(x,s,s+1,0)
{=}
\int_{\lambda_{\min}}^{\lambda_{\max}}\!\!\!\!\!x(\lambda) A(s)\lambda\log_2\left(\frac{\lambda}{\bar\lambda(s)}\right)\mathrm d\lambda,
\nonumber\\&
Z^{(\Delta)}(\nu(\cdot|s),s,s-1,1)\simeq 0,
\end{align}
where, in (\ref{sdfsdfsdfsdf}), we have applied L'Hopital's rule to compute the limit.
Finally, using (\ref{48}), we obtain $\lim_{\Delta\to 0}I_\nu^{(\Delta)}(s)=I_\nu(s)$
as given by (\ref{asymrate}).

We now compute the asymptotic steady-state distribution. 
The steady-state distribution $\pi_{\nu}^{(\Delta)}$ is the unique solution of
the balance equations, for $s\in\{1,2,\dots,S_{\max}\}$,
\begin{align}
&\sum_{s^\prime=0}^{s-1}\pi_{\nu}^{(\Delta)}(s^\prime)
\sum_{s^{\prime\prime}=s}^{S_{\max}}\int_{\lambda_{\min}}^{\lambda_{\max}}\nu(\lambda|s^\prime)p_{S}^{(\Delta)}(s^{\prime\prime}|s^\prime,\lambda)\mathrm d\lambda
\nonumber\\&
=
\sum_{s^{\prime}=s}^{S_{\max}}\pi_{\nu}^{(\Delta)}(s^\prime)
\sum_{s^{\prime\prime}=0}^{s-1}
\int_{\lambda_{\min}}^{\lambda_{\max}}\nu(\lambda|s^\prime)p_{S}^{(\Delta)}(s^{\prime\prime}|s^\prime,\lambda)\mathrm d\lambda.
\end{align}
By reordering the terms, we can rewrite
\begin{align}
\label{above}
&\pi_{\nu}^{(\Delta)}(s-1)
\int_{\lambda_{\min}}^{\lambda_{\max}}\!\!\!\nu(\lambda|s-1)p_{S}^{(\Delta)}(s|s-1,\lambda)\mathrm d\lambda
-\pi_{\nu}^{(\Delta)}(s)
\int_{\lambda_{\min}}^{\lambda_{\max}}\!\!\!\nu(\lambda|s)p_{S}^{(\Delta)}(s-1|s,\lambda)\mathrm d\lambda
\\&
=
\pi_{\nu}^{(\Delta)}(s)
\sum_{s^{\prime\prime}=0}^{s-2}
\int_{\lambda_{\min}}^{\lambda_{\max}}\nu(\lambda|s)p_{S}^{(\Delta)}(s^{\prime\prime}|s,\lambda)\mathrm d\lambda
%
-\pi_{\nu}^{(\Delta)}(s-1)
\sum_{s^{\prime\prime}=s+1}^{S_{\max}}\int_{\lambda_{\min}}^{\lambda_{\max}}\!\!\!\!\nu(\lambda|s^\prime)p_{S}^{(\Delta)}(s^{\prime\prime}|s^\prime,\lambda)\mathrm d\lambda
\nonumber\\&
+\!\!\sum_{s^{\prime}=s+1}^{S_{\max}}\pi_{\nu}^{(\Delta)}(s^\prime)
\sum_{s^{\prime\prime}=0}^{s-1}
\int_{\lambda_{\min}}^{\lambda_{\max}}\!\!\!\!\!\!\nu(\lambda|s^\prime)p_{S}^{(\Delta)}(s^{\prime\prime}|s^\prime,\lambda)\mathrm d\lambda
%
-\sum_{s^\prime=0}^{s-2}\pi_{\nu}^{(\Delta)}(s^\prime)
\sum_{s^{\prime\prime}=s}^{S_{\max}}\int_{\lambda_{\min}}^{\lambda_{\max}}\!\!\!\!\!\!\nu(\lambda|s^\prime)p_{S}^{(\Delta)}(s^{\prime\prime}|s^\prime,\lambda)\mathrm d\lambda.
\nonumber
\end{align}
Using (\ref{ps})-(\ref{11}) and the fact that the terms $g_{a,b}(\Delta,s,\lambda)$ are of the order of $o(\Delta^{\max\{a+b,2\}})$,
it can be shown that 
the right hand-side of (\ref{above}) contains terms of the order of $o(\Delta^2)$. Therefore,
using (\ref{ps})-(\ref{10}) to express the left-hand side and including the terms $o(\Delta^2)$ in the right-hand side,
(\ref{above}) becomes
\begin{align}
&\pi_{\nu}^{(\Delta)}(s-1)
\int_{\lambda_{\min}}^{\lambda_{\max}}\nu(\lambda|s-1)\Delta A(s-1)\lambda\mathrm d\lambda
-\pi_{\nu}^{(\Delta)}(s)\Delta\mu(s)
=
o(\Delta^2),
\end{align}
or equivalently, by dividing each side by $\Delta$,
\begin{align}
&\pi_{\nu}^{(\Delta)}(s-1)A(s-1)\bar\lambda(s-1)
-\pi_{\nu}^{(\Delta)}(s)\mu(s)
=
o(\Delta).
\end{align}
Therefore, letting $\Delta\to 0$, we obtain
\begin{align}
\pi_{\bar\lambda}(s)
=
\frac{A(s-1)\bar\lambda(s-1)}{\mu(s)}\pi_{\bar\lambda}(s-1).
\end{align}
The steady-state distribution (\ref{SSD}) is obtained by solving recursively, thus proving
the proposition.

\section*{Appendix C: Proof of Proposition \ref{lem4}}
Using (\ref{setV}), the capacity $C^*$ can be decoupled as
 \begin{align}
C^*
=
\max_{\bar\lambda}
\max_{\nu\in\mathcal V(\bar\lambda)}
\sum_{s=0}^{S_{\max}}\pi_{\bar\lambda}(s)I_\nu(s),
\end{align}
where the inner optimization is over the set of distribution with a predefined 
average desired input electron intensity profile $\bar\lambda$, 
and the outer optimization maximizes over such $\bar\lambda$. 

We now solve the inner optimization problem
 \begin{align}
\max_{\nu\in\mathcal V(\bar\lambda)}
\sum_{s=0}^{S_{\max}}\pi_{\bar\lambda}(s)I_\nu(s).
\end{align}
Since $I_\nu(s)$ is only a function of $\nu(\cdot|s)$ and is independent of $\nu(\cdot|s^\prime),\ \forall s^\prime\neq s$,
and the steady-state distribution $\pi_{\bar\lambda}$ is the same for all $\nu\in\mathcal V(\bar\lambda)$,
we obtain
 \begin{align}
\max_{\nu\in\mathcal V(\bar\lambda)}
\sum_{s=0}^{S_{\max}}\pi_{\bar\lambda}(s)I_\nu(s)
=
\sum_{s=0}^{S_{\max}}\pi_{\bar\lambda}(s)
I(\bar\lambda(s),s),
\end{align}
where we have defined
\begin{align}
I(\bar\lambda(s),s)
=\max_{\nu(\cdot|s)} I_\nu(s),\text{ s.t. }
 \int_{\lambda_{\min}}^{\lambda_{\max}}\nu(\lambda|s)\lambda\mathrm d\lambda=\bar\lambda(s).
\end{align}
Equivalently, using (\ref{asymrate}),
\begin{align}
\label{zzz}
&I(\bar\lambda(s),s)
=\max_{x}
A(s) \int_{\lambda_{\min}}^{\lambda_{\max}}x(\lambda)\lambda\log_2\left(\frac{\lambda}{\bar\lambda(s)}\right)\mathrm d\lambda,
\nonumber\\&
\text{ s.t. } \int_{\lambda_{\min}}^{\lambda_{\max}}x(\lambda)\lambda\mathrm d\lambda=\bar\lambda(s),
\ \int_{\lambda_{\min}}^{\lambda_{\max}}x(\lambda)\mathrm d\lambda=1.
\end{align}
Letting
\begin{align}
z(y)\triangleq y\log_2\left(\frac{y}{\bar\lambda(s)}\right),
\end{align}
the optimization problem (\ref{zzz}) is equivalent to
\begin{align}
&I_\nu^*(s)
=
\max_{x}
A(\mathbf s)\int_{\lambda_{\min}}^{\lambda_{\max}}x(\lambda)z(\lambda)\mathrm d\lambda
\nonumber\\&
\text{ s.t. }\int_{\lambda_{\min}}^{\lambda_{\max}}x(\lambda)\lambda\mathrm d\lambda=\bar\lambda(s),
\ \int_{\lambda_{\min}}^{\lambda_{\max}}x(\lambda)\mathrm d\lambda=1.
\end{align}
Note that  $z(y)$ is a convex function of $y$. 
Therefore, we have that 
\begin{align*}
\int_{\lambda_{\min}}^{\lambda_{\max}}\!\!\!\!\!x(\lambda)z(\lambda)\mathrm d\lambda
{\leq}
\frac{\lambda_{\max}{-}\bar\lambda(s)}{\lambda_{\max}{-}\lambda_{\min}}z(\lambda_{\min})
{+}
\frac{\bar\lambda(s){-}\lambda_{\min}}{\lambda_{\max}{-}\lambda_{\min}}z(\lambda_{\max}).
\end{align*}
Therefore, the maximum in (\ref{zzz}) is attained by a distribution $\nu^*$
which selects $\lambda_{k}\in\{\lambda_{\min},\lambda_{\max}\}$ with  probabilities
given by (\ref{muopt}), so as to attain the constraint $\mathbb E[\lambda_k|S_k=s]=\bar\lambda(s),\forall s$.
Under the optimal distribution,  the expression of 
 $I(x,s)$ can be shown to be as in (\ref{imu}).
The proposition is thus proved.

\section*{Appendix D: Proof of Theorem \ref{thm1}}
We prove the theorem by analyzing structural properties of the $n$-step cost-to-go function \cite{Bertsekas2005} from state $s$, denoted by
 $V_n(s)$.
 Using (\ref{txprob}) and the structural properties of Proposition \ref{lem4}, which restricts the input signal to binary values $\lambda_k\in\{\lambda_{\min},\lambda_{\max}\}$,
$V_n(s)$ solves recursively
\begin{align}
\label{vn}
V_n(s)=\max_{x\in[\lambda_{\min},\lambda_{\max}]}&\left\{I(x,s)+\delta A(s)xV_{n-1}(s+1)+\delta\mu(s)V_{n-1}(s-1)\right.
\nonumber\\&\left.
+(1-\delta  A(s)x-\delta\mu(s))V_{n-1}(s)\right\},
\end{align}
where $V_0(s)=0,\ \forall s$.
The maximizer of (\ref{vn}) is the optimal 
average desired input electron intensity in step $n$, denoted by $\bar\lambda_n(s)$, yielding
\begin{align}
\label{newvn}
V_n(s)=&
V_{n-1}(s)+I(\bar\lambda_n(s),s)
- \delta A(s)\bar\lambda_n(s)[V_{n-1}(s)-V_{n-1}(s+1)]
\nonumber\\&
+\delta\mu(s)[V_{n-1}(s-1)-V_{n-1}(s)].
\end{align}
We prove that
\begin{align}
\label{toprove}
\lambda_{\min}\leq\bar\lambda_n(s)\leq\bar\lambda_{MP},\ \forall s,\ \forall n.
\end{align}
Then, since $\bar\lambda_n(s)$ converges to the optimal policy as $n\to\infty$, \emph{i.e.},
 $\bar\lambda_n(s)\to\bar\lambda^*(s)$ for $n\to\infty$ \cite{Bertsekas2005},
 it follows that the condition (\ref{toprove}) implies $\lambda_{\min}\leq\bar\lambda^*(s)\leq\bar\lambda_{MP}$ by taking the limit $n\to\infty$, thus proving the theorem.

It can be shown that $I(x,s)$, given by (\ref{imu}), is a strictly concave function of $x$, with first derivative
\begin{align}
\label{derivis}
&I^\prime(x,s)
\triangleq
\frac{\mathrm d I(x,s)}{\mathrm d x}
=
A(s)\left[\log_2\left(\frac{\lambda_{\max}}{ex}\right)
+\frac{\rho}{1-\rho}\log_2\left(\frac{1}{\rho}\right)\right].
\end{align}
Therefore,
given $V_{n-1}$, the objective function in (\ref{vn}) is a concave function of $x$. Its derivative with respect to $x$ is given by
\begin{align}
f(x)\triangleq I^\prime(x,s)-\delta A(s)[V_{n-1}(s)-V_{n-1}(s+1)],
\end{align}
We have the following cases:

\noindent{\bf1}) If $f(\lambda_{\max})>0$, or equivalently, by using (\ref{derivis}) and (\ref{MPexpr}),
 $2^{-\delta[V_{n-1}(s)-V_{n-1}(s+1)]}\bar\lambda_{MP}>\lambda_{\max}$,
then $\bar\lambda_n(s)=\lambda_{\max}$.
 It follows that the condition 
$\bar\lambda_n(s)\leq\bar\lambda_{MP}$ is violated.
Note that, for this case to occur, $V_{n-1}(s+1)-V_{n-1}(s)>0$ must necessarily hold.

\noindent{\bf2})
If $f(\lambda_{\min})>0$, or equivalently $2^{\delta[V_{n-1}(s+1)-V_{n-1}(s)]}\bar\lambda_{MP}<\lambda_{\min}$, then
$\bar\lambda_n(s)=\lambda_{\min}$.
 It follows that the condition 
$\bar\lambda_n(s)\leq\bar\lambda_{MP}$ is satisfied. Using (\ref{MPexpr}),
this case is equivalent to
\begin{align}
V_{n-1}(s)-V_{n-1}(s+1)>\frac{\frac{1}{1-\rho}\ln\left(\frac{1}{\rho}\right)-1}{\delta}\log_2(e)\triangleq \gamma.
\end{align}

\noindent{\bf3}) Otherwise, $\bar\lambda_n(s)$ is the unique solution of $f(x)=0$. Using (\ref{derivis}), we thus obtain
\begin{align}
\label{optimasdf}
\bar\lambda_n(s)=2^{-\delta[V_{n-1}(s)-V_{n-1}(s+1)]}\bar\lambda_{MP}.
\end{align}
In this case, the condition 
$\bar\lambda_n(s)\leq\bar\lambda_{MP}$ is satisfied if and only if 
$V_{n-1}(s+1)-V_{n-1}(s)\leq 0$.

By combining cases {\bf1}, {\bf2} and {\bf3}, we obtain that
the condition stated in (\ref{toprove}) is equivalent to 
\begin{align}
\label{sdfsdfsdf}
V_{n-1}(s)-V_{n-1}(s+1)\geq 0.
\end{align}
We prove (\ref{sdfsdfsdf}) by induction on $n$, thus implying (\ref{toprove}). Trivially, (\ref{sdfsdfsdf}) holds for $n=1$, since  $V_{0}(s)=0, \forall s$.
Now, let $n\geq1$ and assume that the condition (\ref{sdfsdfsdf}) holds under such $n$. We show that 
(\ref{sdfsdfsdf}) implies 
\begin{align}
\label{sdfsdfsdf2}
V_{n}(s)-V_{n}(s+1)\geq 0,
\end{align}
proving the induction step.
Since case {\bf1} cannot hold under the induction hypothesis,
by combining cases {\bf2} and {\bf3},
  $\bar\lambda_n(s)$ can be expressed as
\begin{align}
\label{c1}
&\bar\lambda_n(s)=\max\left\{\lambda_{\min},2^{-\delta[V_{n-1}(s)-V_{n-1}(s+1)]}\bar\lambda_{MP}\right\}.
\end{align}
Therefore, by reordering the terms and using (\ref{newvn}), we obtain
\begin{align}
&V_n(s)-V_n(s+1)=I(\bar\lambda_n(s),s)-I(\bar\lambda_n(s+1),s+1)
\nonumber\\&
+
 (V_{n-1}(s)-V_{n-1}(s+1))(1-\delta\mu(s+1)-\delta A(s)\bar\lambda_n(s))
\nonumber\\&
+\delta\mu(s)(V_{n-1}(s-1)-V_{n-1}(s))
+\delta A(s+1)\bar\lambda_n(s+1)(V_{n-1}(s+1)-V_{n-1}(s+2))
\nn\\&
\triangleq g(V_{n-1}(s-1),V_{n-1}(s),V_{n-1}(s+1),V_{n-1}(s+2)).
\end{align}
Therefore, the induction step $V_n(s)-V_n(s+1)\geq 0$ is equivalent to $g(V_{n-1}(s-1),V_{n-1}(s),V_{n-1}(s+1),V_{n-1}(s+2))\geq 0$.
Note that $g(V_{n-1}(s-1),V_{n-1}(s),V_{n-1}(s+1),V_{n-1}(s+2))$ is a non-decreasing function of $V_{n-1}(s-1)$.
Since $V_{n-1}(s-1)\geq V_{n-1}(s)$ from (\ref{sdfsdfsdf}), we thus obtain
\begin{align}
&V_n(s)-V_n(s+1)
= g(V_{n-1}(s-1),V_{n-1}(s),V_{n-1}(s+1),V_{n-1}(s+2))
\nn\\&
\geq 
g(V_{n-1}(s),V_{n-1}(s),V_{n-1}(s+1),V_{n-1}(s+2))
=
I(\bar\lambda_n(s),s)-I(\bar\lambda_n(s+1),s+1)
\nn\\&
+ (V_{n-1}(s)-V_{n-1}(s+1))(1-\delta\mu(s+1)-\delta A(s)\bar\lambda_n(s))
\nonumber\\&
+\delta A(s+1)\bar\lambda_n(s+1)(V_{n-1}(s+1)-V_{n-1}(s+2)).
\label{mona}
\end{align}
We now minimize $g(V_{n-1}(s),V_{n-1}(s),V_{n-1}(s+1),V_{n-1}(s+2))$ with respect to 
$V_{n-1}(s)$, and use the induction hypothesis $V_{n-1}(s)\geq V_{n-1}(s+1)$.
We distinguish the two cases $V_{n-1}(s)- V_{n-1}(s+1)>\gamma $ and $\gamma\geq V_{n-1}(s)- V_{n-1}(s+1)\geq 0$ below.

\paragraph{Case $V_{n-1}(s)- V_{n-1}(s+1)>\gamma $}
In this case, 
(\ref{c1}) yields $\bar\lambda_n(s)=\lambda_{\min}$, and therefore,
substituting in (\ref{mona}),
\begin{align}
&g(V_{n-1}(s),V_{n-1}(s),V_{n-1}(s+1),V_{n-1}(s+2))
\nn\\&
=
 (V_{n-1}(s)-V_{n-1}(s+1))(1-\delta\mu(s+1)-\delta A(s)\lambda_{\min})-I(\bar\lambda_n(s+1),s+1)
\nonumber\\&
+\delta A(s+1)\bar\lambda_n(s+1)(V_{n-1}(s+1)-V_{n-1}(s+2)).
\end{align}
The derivative of $g(V_{n-1}(s),V_{n-1}(s),V_{n-1}(s+1),V_{n-1}(s+2))$ with respect to $V_{n-1}(s)$ is given~by
\begin{align}
\left.\frac{\mathrm dg(v,v,V_{n-1}(s+1),V_{n-1}(s+2))}{\mathrm dv}\right|_{v=V_{n-1}(s)}
=
1-\delta\mu(s+1)-\delta A(s)\lambda_{\min}\geq 0,
\end{align}
where we have used  (\ref{deltacond2}).

\paragraph{Case $\gamma\geq V_{n-1}(s)- V_{n-1}(s+1)\geq0$}
In this case, (\ref{c1}) yields $\bar\lambda_n(s)=2^{-\delta[V_{n-1}(s)-V_{n-1}(s+1)]}\bar\lambda_{MP}$.
Substituting this expression in (\ref{mona}),
 we obtain
\begin{align}
&\left.\frac{\mathrm dg(v,v,V_{n-1}(s+1),V_{n-1}(s+2))}{\mathrm dv}\right|_{v=V_{n-1}(s)}
=
1-\delta\mu(s+1)-\delta A(s)\bar\lambda_n(s)
\nn\\&
+\delta^2 A(s)2^{-\delta[V_{n-1}(s)-V_{n-1}(s+1)]}\bar\lambda_{MP}\ln2(V_{n-1}(s)-V_{n-1}(s+1))^2
\nn\\&
 -I^\prime(\bar\lambda_n(s),s)
2^{-\delta[V_{n-1}(s)-V_{n-1}(s+1)]}\bar\lambda_{MP}
\ln2\delta[V_{n-1}(s)-V_{n-1}(s+1)].
\end{align}
Now, from (\ref{derivis}) and (\ref{MPexpr}),
 we have that 
\begin{align}
I^\prime(\bar\lambda_n(s),s)
=
A(s)\delta[V_{n-1}(s)-V_{n-1}(s+1)],
\end{align}
and therefore
\begin{align}
\left.\frac{\mathrm dg(v,v,V_{n-1}(s+1),V_{n-1}(s+2))}{\mathrm dv}\right|_{v=V_{n-1}(s)}
=
1-\delta\mu(s+1)-\delta A(s)\bar\lambda_n(s)
\geq 0.
\end{align}

In both cases $V_{n-1}(s)- V_{n-1}(s+1)>\gamma $ and $\gamma\geq V_{n-1}(s)- V_{n-1}(s+1)\geq0$,
 we have that $g(V_{n-1}(s),V_{n-1}(s),V_{n-1}(s+1),V_{n-1}(s+2))$ is a non-decreasing function of 
$V_{n-1}(s)$, minimized by $V_{n-1}(s)=V_{n-1}(s+1)$, yielding $\bar\lambda_n(s)=\bar\lambda_{MP}$.
From (\ref{mona}),
we thus obtain
\begin{align}
&V_n(s)-V_n(s+1)
= g(V_{n-1}(s-1),V_{n-1}(s),V_{n-1}(s+1),V_{n-1}(s+2))
\nn\\&
\geq
g(V_{n-1}(s+1),V_{n-1}(s+1),V_{n-1}(s+1),V_{n-1}(s+2))
\nn\\&
=
I(\bar\lambda_{MP},s)-I(\bar\lambda_n(s+1),s+1)
+\delta A(s+1)\bar\lambda_n(s+1)(V_{n-1}(s+1)-V_{n-1}(s+2))
\nn\\&
\geq
I(\bar\lambda_{MP},s)-I(\bar\lambda_n(s+1),s+1),
\end{align}
where in the last inequality we have used the induction hypothesis $V_{n-1}(s+1)-V_{n-1}(s+2)\geq 0$.
Finally, using (\ref{imu}) and the fact that $A(0)=1$ from Assumption \ref{ass1}, we obtain
\begin{align}
&I(\bar\lambda_{MP},s)-I(\bar\lambda_n(s+1),s+1)=A(s)I(\bar\lambda_{MP},0)-A(s+1)I(\bar\lambda_n(s+1),0)
\nn\\&
=
(A(s)-A(s+1))I(\bar\lambda_{MP},0)
+A(s+1)\left[I(\bar\lambda_{MP},0)-I(\bar\lambda_n(s+1),0)\right]
\geq 0,
\end{align}
where we have used the definition of $\bar\lambda_{MP}$ in (\ref{eqmp}) and the fact that 
$A(s)\geq A(s+1)$ from Assumption \ref{ass1}.
Therefore, $V_n(s)-V_n(s+1)\geq 0$, proving the induction step and the theorem.

\section*{Appendix E: Proof of Proposition \ref{lem5}}
In the policy evaluation step \cite{Bertsekas2005}, given
the policy $\bar \lambda^{[i]}$ at the beginning of the $i$th
 stage of the algorithm, we evaluate the value function $v^{[i]}(s)$ over the states $s\in\mathcal S$. 
 Let $C^{[i]}$ be the achievable rate under policy $\bar \lambda^{[i]}$.
 The value function under $\bar \lambda^{[i]}$ is the solution of \cite{Bertsekas2005}
 \begin{align}
v^{[i]}(s){-}\sum_{s_1\in\mathcal S}\mathbb P(S_{k+1}{=}s_1|S_k{=}s,\bar\lambda^{[i]}(s))v^{[i]}(s_1)
 {=}
I(\bar\lambda^{[i]}(s),s)
-C^{[i]},\ \forall s\in\mathcal S
\end{align}
with $v^{[i]}(0)=0$.
Equivalently, using (\ref{txprob}), we obtain
 \begin{align}
&v^{[i]}(0)=0,
\nonumber\\&
v^{[i]}(s)
-\delta\mu(s)v^{[i]}(s-1)
-\delta A(s)\bar\lambda^{[i]}(s)v^{[i]}(s+1)
\\&
-[1-\delta\mu(s)-\delta A(s)\bar\lambda^{[i]}(s)]v^{[i]}(s)
 =
I(\bar\lambda^{[i]}(s),s)
-C^{[i]},\ \forall s\in\mathcal S,
\nonumber
 \label{second}
\end{align}
Let $D^{[i]}(0)=0$ and $D^{[i]}(s)=\delta[v^{[i]}(s)-v^{[i]}(s-1)],\ \forall s>0$.
Then, the recursive expression (\ref{Di}) directly follows by 
rewriting (\ref{second}) in terms of $D^{[i]}$.

In the policy improvement step, an improved policy $\bar\lambda^{[i+1]}(s)$ is determined as
 \begin{align}
\bar\lambda^{[i+1]}(s)
=
\underset{x\in[\lambda_{\min},\bar\lambda_{MP}]}{\arg\max}
I(x,s)
{+}\sum_{s_1\in\mathcal S}
\mathbb P(S_{k+1}{=}s_1|S_k{=}s,\bar\lambda^{[i]}(s))v^{[i]}(s_1),
\end{align}
where we have used Theorem \ref{thm1} to restrict the optimization over $[\lambda_{\min},\bar\lambda_{MP}]$, and,
using~(\ref{txprob}),
 \begin{align}
\bar\lambda^{[i+1]}(s)
&{=}
\underset{x\in[\lambda_{\min},\bar\lambda_{MP}]}{\arg\max}
I(x,s)
+\delta\mu(s)v^{[i]}(s-1)
+\delta A(s)xv^{[i]}(s+1)
+[1{-}\delta\mu(s){-}\delta A(s)x]v^{[i]}(s)
\nn\\&
{=}
\underset{x\in[\lambda_{\min},\bar\lambda_{MP}]}{\arg\max}
I(x,s)+ A(s)xD^{[i]}(s+1).
\label{mavadarviaelcul}
\end{align}
Since the objective function in (\ref{mavadarviaelcul}) is concave in $x$, we have the following cases, yielding~(\ref{PIS}):

\noindent {\bf 1}) $I^\prime(\bar\lambda_{MP},s)+ A(s)D^{[i]}(s+1)\geq 0$,
where $I^\prime(x,s)$ is given by (\ref{derivis}),
 or equivalently, using the  expression of $\bar\lambda_{MP}$ in  (\ref{MPexpr}),
$D^{[i]}(s+1)\geq 0$. In this case,
 the objective function in (\ref{mavadarviaelcul})  is an increasing function of $x$, hence the optimal is $\bar\lambda^{[i+1]}(s)=\bar\lambda_{MP}$.

\noindent {\bf 2}) $I^\prime(\lambda_{\min},s){+}A(s)D^{[i]}(s{+}1)\leq 0$, or equivalently
$D^{[i]}(s{+}1){\leq}\log_2(e){+}\log_2(\rho)\frac{1}{1-\rho}$. In this case,
 the objective function in (\ref{mavadarviaelcul})  is a decreasing  function of $x$, hence the optimal is $\bar\lambda^{[i+1]}(s)=\lambda_{\min}$.

\noindent {\bf 3}) Otherwise, the optimal $\bar\lambda^{[i+1]}(s)$ is the unique $x\in[\lambda_{\min},\bar\lambda_{MP}]$ such that 
$I^\prime(x,s)+ A(s)D^{[i]}(s+1)=0$. Using the expression of $I^\prime(x,s)$ (\ref{derivis})
and of $\bar\lambda_{MP}$ in  (\ref{MPexpr}),
 we obtain
 \begin{align}
  \bar\lambda^{[i+1]}(s)
  =\frac{\lambda_{\max}}{e}\rho^{-\frac{\rho}{1-\rho}}2^{D^{[i]}(s+1)}
  =\bar\lambda_{MP}2^{D^{[i]}(s+1)}.
 \end{align}
The specific form of the policy iteration Algorithm \ref{PIA} is thus proved.

\bibliographystyle{IEEEtran}
\bibliography{IEEEabrv,Refs}

\end{document}